\algrenewcommand\algorithmicrequire{\textbf{Input:}}
\algrenewcommand\algorithmicensure{\textbf{Output:}}
\begin{document}

\title{OFL: Opportunistic Federated Learning for Resource-Heterogeneous and Privacy-Aware Devices}

\author{Yunlong Mao, Mingyang Niu, Ziqin Dang, Chengxi Li, Hanning Xia, Yuejuan Zhu, Haoyu Bian, Yuan Zhang, Jingyu Hua, Sheng Zhong}
\affiliation{%
  \institution{Nanjing University}
  \city{Nanjing}
  \country{China}
}

\renewcommand{\shortauthors}{Yunlong Mao, Mingyang Niu, Ziqin Dang et al.}

\begin{abstract}
Efficient and secure federated learning (FL) is a critical challenge for resource-limited devices, especially mobile devices. Existing secure FL solutions commonly incur significant overhead, leading to a contradiction between efficiency and security. As a result, these two concerns are typically addressed separately. This paper proposes Opportunistic Federated Learning (OFL), a novel FL framework designed explicitly for resource-heterogenous and privacy-aware FL devices, solving efficiency and security problems jointly. OFL optimizes resource utilization and adaptability across diverse devices by adopting a novel hierarchical and asynchronous aggregation strategy. OFL provides strong security by introducing a differentially private and opportunistic model updating mechanism for intra-cluster model aggregation and an advanced threshold homomorphic encryption scheme for inter-cluster aggregation. Moreover, OFL secures global model aggregation by implementing poisoning attack detection using frequency analysis while keeping models encrypted. We have implemented OFL in a real-world testbed and evaluated OFL comprehensively. The evaluation results demonstrate that OFL achieves satisfying model performance and improves efficiency and security, outperforming existing solutions.
\end{abstract}



\keywords{Federated Learning, Privacy Preservation, Homomorphic Encryption, Device Heterogeneity}


\maketitle

\section{Introduction}
Federated Learning (FL) has emerged as a promising solution for collaborative machine learning, especially for large-scale participants. For example, Google uses FL for improving autocorrect and text suggestions of mobile phone keyboards \cite{xu2023federated}, and Nvidia offers FLARE \cite{NVIDIA-2025-03-08} to enable FL in autonomous vehicles driving. However, FL has encountered critical issues in practical deployments \cite{chahoud2023feasibility, ji2023joint, ezzeldin2023fairfed}, including device heterogeneity, resource limitation, data heterogeneity, and so on. What makes things worse is that private data leakage has been identified in FL. Although raw data is kept locally, private training data can still be disclosed by membership inference \cite{yan2022membership}, reconstruction \cite{mehnaz2022your}, and many other attacks.

To implement efficient mobile FL systems, clustering FL (CFL) \cite{sattler2020clustered} and hierarchical FL (HFL) \cite{wang2021resource} frameworks have been proposed. CFL frameworks like IFCA \cite{ghosh2020efficient} suggest clustering participants based on their training data similarity to mitigate the influence of data heterogeneity. Since real-world participants' computing and communicating resources may vary dramatically (e.g., a gap of 160x in Asteroid \cite{ye2024asteroid}), frameworks like \cite{xu2023clustered} suggest grouping participants based on device resources. HFL frameworks commonly group participants based on the similarity of local model updates \cite{briggs2020federated} or datasets characteristics \cite{wang2021resource} while dynamically adjusting groups in response to changes in network conditions or client availability \cite{abad2020hierarchical, chen2020fedcluster}. Except for differences in grouping, HFL assigns a stable leader for each group to communicate with the central server, while CFL has no restrictions on communications.

Meanwhile, research efforts \cite{sav2020poseidon, yanefficient, kanagavelu2020two, fereidooni2021safelearn} have been made to address privacy leakage issues, introducing sophisticated defense solutions. Since cryptographic tools like multiparty computation (MPC) \cite{gehlhar2023safefl} and homomorphic encryption (HE) \cite{jin2023fedml} are widely used, these solutions need intensive computation and communication, leading to unaffordable costs for FL participants, especially for mobile devices. Besides, it has been proved that solely applying cryptographic tools cannot prevent threats from inference and reconstruction attacks \cite{zhao2024loki,mehnaz2022your}. Thus, artificial perturbations like differentially private (DP) mechanisms are still needed to protect models trained in FL \cite{truex2020ldp, zhao2020local,mao2023secure,yang2023privatefl}.

Although CFL and HFL consider participant conditions, they rely on frequent communication between participants and the central server. This leads to significant communication overhead and stable communication channel requirements, which is impractical for mobile devices like autonomous vehicles and mobile phones. Besides, the existing FL frameworks lack a comprehensive consideration of device heterogeneity and security. Most existing studies either focus on matching heterogeneous resources or defending against attacks, failing to provide heterogeneous device management and FL security in a single solution. However, we note that it is rather challenging to design a satisfying FL framework due to the following reasons. \Circled{1} Heterogeneous devices have a large variance of local running time. It is hard to synchronize on-device models efficiently. \Circled{2} Participants, especially for mobile devices, may have unstable network connections or limited bandwidth, leading to unavailability or synchronizing dropout. \Circled{3} Preserving local data privacy contradicts global model learning algorithms. It is hard to balance data privacy and model performance. \Circled{4} Securing data confidentiality and integrity against an untrusted server contradicts malicious client detection. Encryption-based FL solutions have trouble finding malicious model updates by leveraging conventional methods efficiently.

Having realized the gap between existing FL frameworks and practical demands, we propose an opportunistic federated learning (OFL) framework for resource-heterogeneous and privacy-aware devices, solving the aforementioned challenges simultaneously. OFL uses a hierarchical aggregation protocol to enhance the performance of FL in the presence of heterogeneous devices and data distributions. Notably, multiple factors, including computation, communication, data heterogeneity, availability, and privacy budget, are fully considered by OFL to fulfill practical demands. OFL ensures efficiency by employing an opportunistic updating strategy. When a device is available to communicate, it seizes the opportunity to sync the most crucial part of model parameters based on a well-defined score function. This strategy, coupled with the pruning of updating gradients according to significance and similarity, ensures that the server in OFL can maintain a balanced model performance through periodic re-clustering based on local model states and device availability. For security concerns, OFL utilizes a hybrid solution, combining an advanced threshold fully HE scheme for inter-cluster aggregation with a novel DP mechanism for intra-cluster aggregation. In particular, we design the DP mechanism utilizing the opportunistic updating strategy and pruning, which offers a better balance between privacy budget and model performance than conventional DP mechanisms. Overall, OFL makes the following contributions:
\begin{itemize}
    \item OFL provides a unified solution for resource-heterogeneous and privacy-aware devices, fully considering each device's resource limitation and data heterogeneity.
    \item OFL adopts an opportunistic model updating strategy, providing stragglers with an optimized syncing option considering device resources and privacy budgets, achieving better resource utilization.
    \item OFL has a hybrid defense with provable security, including a novel DP mechanism and an advanced threshold fully HE scheme. In addition, OFL significantly reduces the additional overhead of protection through the opportunistic updating strategy.
    \item We implement OFL in a real-world testbed using Jetson Xavier NX boards mimicking different devices with various working modes. We evaluate OFL on various learning tasks and settings. Real-world test results and large-scale simulating results indicate that OFL provides satisfying features and outperforms existing work. The source code of OFL implementation will be available after anonymous peer review.
\end{itemize}

\section{Related Work and Motivation}
\begin{table*}[]
\caption{Comparison of recent solutions for resource-efficient FL.}
\label{tab:related}
\footnotesize
\begin{tabular}{cccccccccccc}
\toprule
\multicolumn{1}{c}{\multirow{2}{*}{Solution}} & \multicolumn{5}{c}{Resource Perception} & \multicolumn{1}{c}{\multirow{2}{*}{H}} & \multicolumn{1}{c}{\multirow{2}{*}{S}} & \multicolumn{1}{c}{\multirow{2}{*}{AS}} & \multicolumn{1}{c}{\multirow{2}{*}{CS}} & \multicolumn{1}{c}{\multirow{2}{*}{Computation}} & \multicolumn{1}{c}{\multirow{2}{*}{Communication}} \\ \cmidrule(lr){2-6}
\multicolumn{1}{c}{} & C & N & D & A & P & \multicolumn{1}{c}{} & \multicolumn{1}{c}{} & \multicolumn{1}{c}{} & \multicolumn{1}{c}{} & \multicolumn{1}{c}{} & \multicolumn{1}{c}{} \\ \midrule

CFL \cite{sattler2020clustered} & \ding{55} & \ding{55} & \ding{51} & \ding{55} & \ding{55} & \ding{55} & \ding{55} & \ding{55} & \ding{55} 
& \makecell{w: $\mathcal{O}(|\bm{\theta}|)$ \\ s: $\mathcal{O}( (N_k^2 + N\log_{2}N) \cdot |\bm{\theta}|)$} 
& \makecell{w: $\mathcal{O}(|\bm{\theta}|)$ \\ s: $\mathcal{O}(N \cdot |\bm{\theta}|)$} 
\\ \cmidrule(lr){1-1} \cmidrule(lr){2-10} \cmidrule(lr){11-12}

IFCA \cite{ghosh2020efficient} & \ding{55} & \ding{55} & \ding{51} & \ding{55} & \ding{55} & \ding{55} & \ding{55} & \ding{51} & \ding{55} 
& \makecell{w: $\mathcal{O}(T_{lt} \cdot |\bm{\theta}|)$  \\ s: $\mathcal{O}(K \cdot N+ K \cdot T_{lt} \cdot |\bm{\theta}|)$ } 
& \makecell{w: $\mathcal{O}(T_{lt} \cdot |\bm{\theta}| )$
 \\ s:$\mathcal{O}( (T_{lt} + N) \cdot |\bm{\theta}| )$} 
\\ \cmidrule(lr){1-1} \cmidrule(lr){2-10} \cmidrule(lr){11-12}


FLDP \cite{stevens2022efficient} & \ding{51} & \ding{55} & \ding{55} & \ding{51} & \ding{51} & \ding{55} & \ding{51} & \ding{55} & \ding{51} 
& \makecell{w: $\mathcal{O}( N\log_{2}N + |\bm{\theta}|^{2})$ \\ s: $\mathcal{O}(N \cdot (\log_{2}N + |\bm{\theta}|) + |\bm{\theta}|^{2})$} 
& \makecell{w: $\mathcal{O}(N + |\bm{\theta}|)$ \\ s:$\mathcal{O}((N \cdot |\bm{\theta}|+|\bm{\theta}|)$} 
\\ \cmidrule(lr){1-1} \cmidrule(lr){2-10} \cmidrule(lr){11-12}


FedPHE \cite{yanefficient} & \ding{55} & \ding{55} & \ding{51} & \ding{51} & \ding{51} & \ding{55} & \ding{51} & \ding{51} & \ding{55} 
& \makecell{w: $ \mathcal{O}(T_{lt} \cdot |\bm{\theta}| + |\bm{\theta}|^{2} )$ \\ s: $ \mathcal{O}((N+N_k) \cdot |\bm{\theta}| + |\bm{\theta}|^{2} )$} 
& \makecell{w: $ \mathcal{O}( |\bm{\theta}| + |\bm{\theta}|^{2})$ \\ s: $\mathcal{O}(N \cdot |\bm{\theta}| + N \cdot |\bm{\theta}|^{2})$ } 
\\ \cmidrule(lr){1-1} \cmidrule(lr){2-10} \cmidrule(lr){11-12}

Asteroid \cite{ye2024asteroid} & \ding{51} & \ding{51} & \ding{51} & \ding{55} & \ding{55} & \ding{55} & \ding{55} & \ding{55} & \ding{55} 
& \makecell{w: $ \mathcal{O}( T_{tl} \cdot |\bm{\theta}| )$ \\ s: $ \mathcal{O}( K^3 \cdot |\bm{\theta}| +  N \cdot |\bm{\theta}|)$} 
& \makecell{w: $ \mathcal{O}(T_{tl} \cdot |\bm{\theta}|)$ \\ s: $\mathcal{O}( N \cdot |\bm{\theta}|)$} 
\\ \cmidrule(lr){1-1} \cmidrule(lr){2-10} \cmidrule(lr){11-12}

RFL-HA \cite{wang2021resource} & \ding{51} & \ding{51} & \ding{51} & \ding{55} & \ding{55} & \ding{51} & \ding{55} & \ding{55} & \ding{55} 
& \makecell{w: $\mathcal{O}(|\bm{\theta}|)$ \\ l: $\mathcal{O}(N_k \cdot |\bm{\theta}|)$ \\ s: $\mathcal{O}(K \cdot |\bm{\theta}| + N \cdot K )$} 
& \makecell{w: $\mathcal{O}(|\bm{\theta}|)$ \\ l: $\mathcal{O}(N_k \cdot |\bm{\theta}|)$ \\ s: $\mathcal{O}(K \cdot |\bm{\theta}| )$} 
\\ \cmidrule(lr){1-1} \cmidrule(lr){2-10} \cmidrule(lr){11-12}

AHFL \cite{guo2023privacy} & \ding{51} & \ding{51} & \ding{55} & \ding{55} & \ding{51} & \ding{51} & \ding{55} & \ding{55} & \ding{55} 
& \makecell{w:  $\mathcal{O}(|\bm{\theta}|)$ \\ l:  $\mathcal{O}(N_k \cdot |\bm{\theta}|)$\\ s: $\mathcal{O}(K \cdot N + K \cdot |\bm{\theta}|)$} 
& \makecell{w: $\mathcal{O}(|\bm{\theta}|)$ \\ l: $\mathcal{O}(N_k \cdot |\bm{\theta}|)$ \\ s: $\mathcal{O}(K \cdot |\bm{\theta}|)$} 
\\ \cmidrule(lr){1-1} \cmidrule(lr){2-10} \cmidrule(lr){11-12}


\rowcolor{gray!30} \textbf{OFL} & \ding{51} & \ding{51} & \ding{51} & \ding{51} & \ding{51} & \ding{51} & \ding{51} & \ding{51} & \ding{51} 
& \makecell{w: $\mathcal{O}(T_{lt} \cdot |\bm{\theta}|)$ \\ l: $\mathcal{O}(N_k \cdot |\bm{\theta}|)$ \\ s: $\mathcal{O}(N \cdot K + K \cdot |\bm{\theta}| \cdot \log_{2}|\bm{\theta}| )$} 
& \makecell{w: $\mathcal{O}((\gamma_{up} + \gamma_{down}) \cdot |\bm{\theta}|)$ \\ l: $\mathcal{O}((\gamma_{up} + \gamma_{down}) \cdot N_k \cdot |\bm{\theta}|+|\bm{\theta}|)$ \\ s: $\mathcal{O}(K \cdot |\bm{\theta}| )$}
 \\ \bottomrule
\multicolumn{12}{l}{ $\bullet$ ``C'' stands for computing resource, ``N'' for network, ``D'' for data distribution, ``A'' for availability, ``P'' for privacy budget.} \\
\multicolumn{12}{l}{ $\bullet$ ``H'' indicates hierarchical FL, ``S'' indicates whether the solution provides strong security guarantees.} \\
\multicolumn{12}{l}{$\bullet$ ``AS'' stands for asynchronous aggregation support, ``CS'' for the resistance to collusion between the server and clients.} \\
\multicolumn{12}{l}{$\bullet$ ``w'' represents worker, ``s'' represents the server, and ``l'' represents cluster leader.} \\
\multicolumn{12}{l}{$\bullet$ $N$ is The total number of clients, $K$ is the number of clusters, $N_k$ is the number of clients in each cluster.} \\
\multicolumn{12}{l}{$\bullet$ $T_{lt}$ is the local training iterations, $|\bm{\theta}|$ is the size of the model parameters.} \\
\multicolumn{12}{l}{$\bullet$ $\gamma_{up},\gamma_{down}$ are uploading and downloading ratios to the whole model parameters, respectively.} \\
\end{tabular}
\end{table*}

After the emergence of FL, data heterogeneity issues have been identified. Heterogeneous data may lead to model staleness, even resulting in non-convergence. Plenty of approaches have been proposed to mitigate this problem. A typical CFL \cite{sattler2020clustered} groups participants into clusters with jointly trainable data distributions. Another typical work \cite{ghosh2020efficient} proposes an iterative federated clustering algorithm (IFCA), which clusters users and optimizes model parameters to aggregate data within the same cluster. Beyond data, resource heterogeneity has also become an underlying issue of FL, resulting in resource-limited clients straggling. Recent solutions such as Asteroid \cite{ye2024asteroid}, RFL-HA \cite{wang2021resource}, and AHFL \cite{guo2023privacy} have considered clients' resource states when grouping them. By solving an optimizing problem of resource cluster and training job assignment, the improved FL frameworks can achieve better efficiency.

On the other hand, data privacy and aggregation security have also attracted much attention. Recent work like FLDP \cite{stevens2022efficient} has perfectly integrated DP into FL, providing provable data privacy for FL clients. However, aggregation security is not considered in \cite{stevens2022efficient}. FedPHE \cite{yanefficient} provides an impressive solution, ensuring data privacy and aggregation security together. However, the cost is relatively high when compared to other efficient FL solutions. It should be noted that RFL-HA \cite{wang2021resource} and AHFL \cite{guo2023privacy} have applied a hierarchical aggregation scheme for FL, which significantly mitigates the workload of workers and the server.

Through a comprehensive literature review, we find there is a gap between resource-efficient FL and secure FL. It is intuitive to understand. Equipping FL with data privacy preservation and aggregation security means the adoption of heavy cryptographic tools like FHE. We summarize typical papers of related work from multiple perspectives\footnote{Please note that the approximate complexities are calculated mainly in client scale and model size. Therefore, some operations, like accessing an array for constant times and HE algorithms, are not precisely measured.} in Table~\ref{tab:related}. As we can see, CFL solutions like CFL \cite{sattler2020clustered} and IFCA \cite{ghosh2020efficient} mainly focus on solving data heterogeneity issues through client clustering. Secure FL solutions like FLDP \cite{stevens2022efficient} and FedPHE commonly utilize cryptographic tools like FHE and DP, providing security but introducing high costs. Meanwhile, these studies have not fully considered resource limitations and server-client collision situations. Asteroid \cite{ye2024asteroid} has done impressive work on resource optimization and job assignment. However, no security guarantees are provided. Motivated by the practical needs of an efficient FL framework for resource-limited and privacy-aware clients, we design OFL, providing resource perception, privacy preservation, aggregation security, efficiency, and scalability all in one solution.

\section{Preliminary}
\subsection{Federated Learning}
FL aims at a global model collaboratively trained by multiple contributors. In a typical FL \cite{mcmahan2017communication} framework with $N$ clients, each client independently trains its local model using private dataset $\mathcal{D}$. The global optimization objective of FL is
\begin{align}
\label{fl_optim}
    \min_{\bm{\theta}_1, \bm{\theta}_2, \ldots, \bm{\theta}_N} \frac{1}{N} \sum\nolimits_{i=1}^N  \mathcal{L}_i(\bm{\theta}_i;\mathcal{D}_i),
\end{align}
where $\mathcal{L}_i$ is the loss function and $\bm{\theta}_i$ is the local model parameters of the client with index $i$. A central parameter server (PS) aggregates the updates of client local models using a predefined strategy like \textit{FedAvg}, which runs the aggregation based on each client's data size,
\begin{align}
\label{fedavg}
    \overline{\bm{\theta}} \leftarrow \sum\nolimits_{i=1}^N \frac{|\mathcal{D}_i|}{|\overline{\mathcal{D}}|} \bm{\theta}_i,
\end{align}
where $\overline{\mathcal{D}}$ is the union set size of all training data, defined as $|\overline{\mathcal{D}}| = \sum_{i=1}^N |\mathcal{D}_i|$. However, this aggregation strategy may have difficulty in data heterogeneity. Recent research efforts suggest tackling this issue by implementing clustered FL or personalized FL techniques to acquire better performance.

\subsection{Differential Privacy}
DP is a rigorous notation providing strong privacy guarantees \cite{dwork2014algorithmic}. The key idea is to ensure that the output of a computation does not significantly differ when any single individual's data is included or excluded. A randomized algorithm $\mathcal{A}$ is $(\epsilon, \delta)$-DP if for all datasets $\mathcal{D}_1$ and $\mathcal{D}_2$ differing on at most one data point, and for all subsets of outputs $S \subseteq \textrm{Range}(\mathcal{A})$,
\begin{equation}
\Pr[\mathcal{A}(\mathcal{D}_1) \in S] \leq e^\epsilon \cdot \Pr[\mathcal{A}(\mathcal{D}_2) \in S] + \delta,
\end{equation}
where $\epsilon$ is the privacy budget parameter that quantifies the privacy guarantee, with smaller values indicating stronger privacy. The parameter $\delta$ is a small probability that allows for a relaxation of the strict privacy guarantee.

\subsection{Fully Homomorphic Encryption}
FHE is an encryption scheme that allows computations to be carried out on ciphertexts. After the evaluation of the ciphertext, FHE generates an encrypted result that, when decrypted, matches the result of evaluation performed on the plaintext, e.g., $\textsf{Enc}(a) \oplus \textsf{Enc}(b) = \textsf{Enc}(a \oplus b)$. Among FHE schemes, CKKS \cite{cheon2017homomorphic} is a special one, providing arithmetic evaluations of approximate numbers (a.k.a., AHE). In our work, an advanced threshold FHE (ThFHE) based on CKKS is utilized. Here, we recall the ThFHE definition in \cite{passelegue2025low}.
\begin{definition}{ThFHE} The ThFHE scheme is a tuple of six PPT algorithms, i.e., \textsf{ThFHE = (KeyGen, Enc, Eval, ServerDec, PartDec, FinDec)}, satisfying the following properties.
\begin{itemize}
    \item \textsf{KeyGen}($\lambda,N$) takes as input a security parameter $\lambda$, the party number $N$, and returns a public parameter set $\textsf{pp}$ including the plaintext space $\mathcal{M}$, the ciphertext space $\mathcal{C}$, the ciphertext space yielded by \textsf{ServerDec} process $\mathcal{D}_{dec}$, public key $\textsf{pk}$, evaluation key $\textsf{ek}$, and secret key shares $\textsf{sk}_1, \textsf{sk}_2, \ldots, \textsf{sk}_N$.
    \item \textsf{Enc}(\textsf{pp},\textsf{pk},m) takes as input the public parameter set \textsf{pp}, public key \textsf{pk}, plaintext $m \in \mathcal{M}$ and returns a ciphertext $\textsf{ct} \in \mathcal{C}$.
    \item \textsf{Eval}(\textsf{pp},\textsf{ek},\textsf{fun},$\textsf{ct}_1,\textsf{ct}_2,\ldots,\textsf{ct}_n$) takes as input the public parameter set \textsf{pp}, evaluation key \textsf{ek}, function \textsf{fun}: $\mathcal{M}^n \rightarrow \mathcal{M}$ with $n \geq 0$, ciphertexts $\textsf{ct}_1,\textsf{ct}_2,\ldots,\textsf{ct}_n$ and returns a ciphertext $\textsf{ct} \in \mathcal{C}$.
    \item \textsf{ServerDec}(\textsf{pp},\textsf{pk},\textsf{ct}) takes as input the public parameter set \textsf{pp}, public key \textsf{pk}, ciphertext \textsf{ct} and returns a ciphertext $\textsf{ct}_{dec} \in \mathcal{C}_{dec}$.
    \item \textsf{PartDec}(\textsf{pp},$\textsf{sk}_i$,$\textsf{ct}_{dec}$) takes as input the public parameter set \textsf{pp}, secret key share $\textsf{sk}_i$, ciphertext $\textsf{ct}_{dec}$ and returns a partial decryption $\textsf{pd}_{i} \in \mathcal{M}_{share}$, where $\mathcal{M}_{share}$ is the space of partial decryption shares.
    \item \textsf{FinDec}(\textsf{pp},$\textsf{pd}_1,\textsf{pd}_2,\ldots,\textsf{pd}_N$) takes as input the public parameter set \textsf{pp}, partial decryption shares $\textsf{pd}_{i} \in \mathcal{M}_{share}$, $i \in [1,N]$ and returns a plaintext $m^{\prime} \in \mathcal{M} \cup \{\bot \}$.
\end{itemize}
\end{definition}
For a brief, we will omit \textsf{pp} in most cases and abbreviate \textsf{Eval} of \textsf{fun} to \textsf{Fun}, such as \textsf{Add}($\textsf{ct}_1,\textsf{ct}_2,\ldots,\textsf{ct}_n$).

\subsection{System Model and Threat Model}
Unlike the original FL, OFL employs a hierarchical aggregation protocol. This protocol clusters clients based on their resource status and selects a relatively high-performance client as the cluster leader. After local training, clients within a cluster transmit their updated local models and current resource status to the leader in an asynchronous way. Subsequently, the leader updates the leader model on receiving a local model every time. This asynchronous aggregation in OFL is called intra-cluster aggregation. The PS initials a global synchronous aggregation periodically. When the synchronous aggregation begins, each cluster leader transmits the latest leader model and clients' resource status to the PS. The PS will decide whether to re-cluster all clients based on the leader model states and resource reports collected. Meanwhile, the PS performs global model aggregation and sends the updated global model back to all leaders, who then forward it to clients inside clusters. This globally synchronous aggregation in OFL is called inter-cluster aggregation. We demo this hierarchical aggregation in Figure~\ref{fig:system}.
\begin{figure}[t!]
    \centering
    \includegraphics[width=\linewidth]{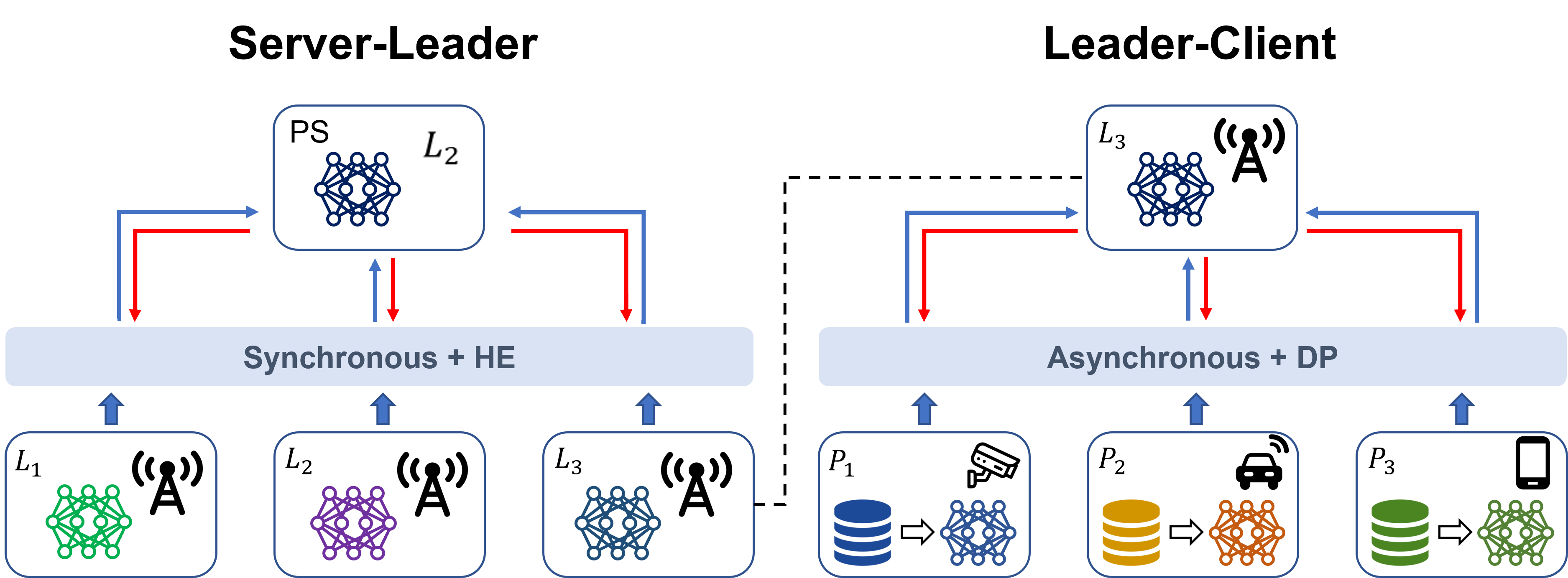}
    \caption{The system model of OFL.}
    \label{fig:system}
\end{figure}

We use the semi-honest threat model in OFL, where the PS, leaders, and clients can be curious about some specific client's private data but must follow the designed protocol honestly. To disclose the private data of the target, the adversary is capable of computing in addition to the protocol, such as membership inference \cite{yan2022membership} and data reconstruction attacks \cite{zhao2024loki}. However, the computation of the protocol cannot be forged arbitrarily. Under this threat model, we also take into account poisoning \cite{al2023untargeted} and backdoor attacks \cite{lyu2023poisoning} against FL model aggregation. These attacks commonly use poisoned training data to manipulate the aggregation result. We will show that OFL can adapt defense such as similarity-based methods \cite{10559866,mao2021romoa} smoothly while keeping models encrypted.

\subsection{Resource Optimizing Problem}
We note that various resource-optimizing problems have been defined in related studies \cite{ye2024asteroid,wang2021resource,guo2023privacy}. However, unlike previous work, we formalize the resource-optimizing problem under more strict constraints. Particularly, we consider each client's resource of computing, communication, data heterogeneity, availability, and privacy budget, which are denoted by $\bm{R}=\{R_{comp}, R_{comm}, R_{data}, R_{avai}, R_{priv}\}$. Each kind of resource has a limitation $L_R^t$ in timestep $t$. Assuming that the amortized cost of each kind of resource for training and uploading a local parameter is $C_{R,up}$ while downloading a global parameter is $C_{R,down}$, we set the objective function of a client as to upload and download updated parameters as much as possible, which meets the demand of clients' attendance in FL, contributing to and learning from the global model. To this end, we have
{\small
\begin{equation}
\label{eq:opt}
\begin{aligned}
    &\max (\gamma_{up}^t+\gamma_{down}^t) |\bm{\theta}| \\
    & \textrm{s.t.} \begin{cases}
        \gamma_{up}^t = |\bm{\theta}_{up}^t| / |\bm{\theta}_{global}| \\
        \gamma_{down}^t = |\bm{\theta}_{down}^t| / |\bm{\theta}_{global}| \\
        (C_{R,up} \cdot \gamma_{up}^t + C_{R,down} \cdot \gamma_{down}^t) |\bm{\theta}| \leq L_R^t, \forall R \in \bm{R} \\
        1 \leq t \leq T 
    \end{cases}
\end{aligned}
\end{equation}
}

\section{OFL Framework}
\label{sec:ofl}
\subsection{Design Overview}
The design rationale of opportunistic federated learning (OFL) can be explained from two angles. First, OFL uses a hierarchical aggregation strategy supporting asynchronous and synchronous aggregation in different learning phases. Synchronous aggregation provides the learning convergence guarantee, while asynchronous aggregation enables resource-limited clients to join the global learning opportunistically. Second, OFL integrates a secure solution specifically designed for resource-limited and privacy-aware clients. OFL takes advantage of the parameter selection process to build a lightweight privacy-preserving model uploading scheme in asynchronous aggregation, providing the DP guarantee. Meanwhile, OFL offers strong security guarantees in synchronous aggregation with FHE techniques, ensuring data security and aggregation security at the same time.

Generally, OFL consists of two phases: intra-cluster aggregation and inter-cluster aggregation. As introduced in the system model, the intra-cluster phase aggregates client models within the cluster into a leader model via a selected leader, while the inter-cluster phase aggregates leader models into a global model via a central parameter server (PS).

Before the intra-cluster phase begins, clients are grouped into clusters according to resource states. Each cluster will select a relatively high-performance client as the leader who is responsible for aggregation within the cluster and communication to the PS. The leader holds a cluster model for aggregation. In the inter-cluster phase, the PS aggregates the leader models into a global model and distributes it to all leaders. It is worth noting that the global model is not transmitted to all clients simultaneously because we adopt an asynchronous aggregation in the client-leader layer, where a client only requires the most urgent updates from the leader. 

Figure ~\ref{fig:system} shows the overview of OFL, while a brief description of OFL is given in Algorithm~\ref{alg:strawman}. Please note that we mainly focus on how clients finish model aggregation and omit the transmission phase. In fact, every client should put the selected and perturbed parameters into an upload queue for transmission to the cluster leader. In the download phase, every client sorts the parameter indexes to be synced by staleness and asks the leader to put sorted parameters into a download queue for transmission to the client. As a result, each client can calculate the ratios of the uploaded and downloaded parameters to the total parameters as $\gamma_{up}$ and $\gamma_{down}$, respectively.
\begin{algorithm}[ht]
\caption{A strawman algorithm of OFL}
\label{alg:strawman}
\begin{algorithmic}[1]
\Require maximum training epoch $T_{max}$, re-clustering interval $T_{rc}$, client resource status $RS^{(0)}$, client number $N$, learning rate $\alpha$, local training iteration $T_{lt}$, privacy budget $\epsilon_1$, $\epsilon_2$
\Ensure global model $\bm{\theta}_{global}$

\State $\bm{\theta}_{i} \xleftarrow{\$} \mathcal{N}(0,1), \forall i \in[1,N]$
\State $CS \leftarrow RS^{(0)}$
\State $t \leftarrow 0$
\While{$t < T_{max}$}
  \If{$t \ \textrm{mod} \ T_{rc} == 0$}
    \State $\{ C_i \}_{i=1}^{k} \leftarrow \textrm{Cluster}(CS,k)$
  \EndIf
  \State $t \leftarrow t + 1$
  \For{$i=0$ to $k$}
    \State select leader $CL_{i} \leftarrow \arg \max_{c \in C_i} CS[c]$ 
    \State $(\bar{\bm{\theta}}_{i}^{(t)}, RS^{(t)}_{i}) \leftarrow \textrm{IntraAgg}(C_i, CL_{i},\alpha,T_{lt},\epsilon_1, \epsilon_2)$
  \EndFor
  \State $(\hat{\bm{\theta}}_{global}^{(t)},RS^{(t)}) \leftarrow \textrm{InterAgg}(\{ CL_i,\bar{\bm{\theta}}_{i}^{(t)}, RS^{(t)}_{i}\}_{i=1}^{k})$
  \State $CS \leftarrow CS \cup RS^{(t)}$
\EndWhile
\State \Return $\bm{\theta}_{global} \leftarrow \hat{\bm{\theta}}_{global}^{(T_{max})}$
\end{algorithmic}
\end{algorithm}

\subsection{Intra-Cluster Aggregation}
The opportunistic feature in the intra-cluster phase has two meanings. First, each client will have an opportunity to contribute to the global model when its resource satisfies the criteria. Second, each parameter of the client model may have an opportunity to be synced with the global model. We note that the first client opportunity has been widely studied in client selection work, such as \cite{cho2022towards,li2022pyramidfl}. Therefore, we generally adapt asynchronous aggregation in the client-leader phase of OFL. We further investigate the second opportunistic feature and design an opportunistic factor (o-factor for short) to fully exploit each client's transmission opportunity. For security concerns, we introduce DP to the inner aggregation phase.

We note that DP has been widely used in privacy-preserving model updating \cite{shokri2015privacy,abadi2016deep}. However, we propose a novelty DP mechanism construction by integrating our o-factor into an exponential mechanism, achieving privacy guarantees and learning efficiency with one shot. Our opportunistic aggregation is also communication-efficient. In observations of practical local model training, only partial parameters are updated noticeably, while the rest are negligible. The most intuitive manifestation is that gradients with significant values benefit the global model's convergence more, which can be explained by learning theory since the descent direction of parameters with significant changes indicates the convergence direction of the model \cite{du2019gradient}.

Now, we define the o-factor considering multiple perspectives. First, the gradient absolute value of each parameter can be seen as an important factor for model updating. Second, we count the staleness of each parameter by tracking its last syncing state. Third, we suggest estimating the density of parameter values since FL on heterogeneous data commonly leads to unique heat maps of client models \cite{ezzeldin2023fairfed}. If unique features cannot contribute to the global model timely, they may vanish due to feature scarcity. Given the client model states of the last two global syncing $\bm{\theta}_0$, $\bm{\theta}_1$, and the latest local training $\bm{\theta}^{\prime}$, the o-factor of each parameter $\theta \in \bm{\theta}$ is
\begin{equation}
\label{eq:ofactor}
\textrm{o-factor} = \frac{| \bm{\theta^{\prime}-\bm{\theta_0}}| \cdot \textrm{DensEst}(\bm{\theta}^{\prime}) } { log_2 (1+| \bm{\theta}_0 - \bm{\theta}_1 |))} 
\end{equation}
where $\textrm{DensEst}()$ is a density estimation process with output range $(0,1)$. If the client model has not been synced globally two times, simply set $\bm{\theta}_1 = \bm{\theta}_0 =0$.

Generally, we can assume that parameters with larger o-factors have higher priority when uploaded. However, this may violate the client's data privacy since parameters with larger o-factors carry more confidential information about the private training data. To tackle the problem, we construct a DP updating scheme based on the exponential mechanism \cite{mcsherry2007mechanism}. As we can see, the o-factor is constructed as a natural utility function. With an additional process, we have the utility score $u_{\bm{\theta}} = \textrm{Norm}(\bm{\theta})$, where $\textrm{Norm}()$ is a max-min normalization process. In this case, we can give the sensitivity of the utility function as $\Delta_{u} = \max_{\forall \theta_0,\theta_1 \in \bm{\theta}} | u_{\theta_0} - u_{\theta_1}| $.

Now, each client in OFL can select and upload parameter $\theta$ with probability proportional to $exp(\frac{\epsilon \cdot u_{\theta}}{2\Delta_{u}})$. We note that this selection can preserve clients' data privacy at some level. However, subtle attacks in practices like membership inference or data reconstruction attacks still have odds to win \cite{nasr2019comprehensive,zhao2024loki}. Therefore, we further mitigate threats by introducing an additional Laplace mechanism to the selected parameters. To this end, we split the total privacy budget $\epsilon$ into $\epsilon_1$ and $\epsilon_2$ for parameter selection and uploading, respectively. It is challenging to apply the original Laplace mechanism \cite{dwork2014algorithmic} directly since model parameters have an unbounded variance, which is quite different from clipped gradients \cite{abadi2016deep}. Fortunately, clients in OFL prefer to select parameters with high-density metrics due to the o-factor design. We can perturb the distance between actual parameter values and their centroids to construct a model publishing process with DP \cite{mao2021secure}. Given the selected parameter set $\bm{\theta}_{sel}$, a kernel density estimation algorithm \textrm{KDE} outputs $k$ clusters $\{ PC_{1}, PC_{2}, \ldots, PC_{k} \}$ with centroids $\{ \omega_{1}, \omega_{2}, \ldots \omega_{k} \}$. Then, the sensitivity of each cluster is $\Delta_{PC_{i}} = 2\max_{\theta_0 \in PC_{i}} | \theta_0-\omega_{i} |$. The additional noise will be drawn from $\textrm{Lap}(\Delta_{PC_{i}}/\epsilon_{PC_{i}})$, where $\epsilon_2 = \sum_{i=1}^{k} \epsilon_{PC_{i}}$.

By summarizing the aforementioned implementations, we give the complete intra-cluster aggregation in Algorithm~\ref{alg:intra}. 
\begin{algorithm}[ht]
\caption{IntraAgg: intra-cluster aggregation}
\label{alg:intra}
\begin{algorithmic}[1]
\Require client cluster $C$, cluster leader $CL$, learning rate $\alpha$, local training iteration $T_{lt}$, privacy budget $\epsilon_1$, $\epsilon_2$
\Ensure leader model $\bar{\bm{\theta}}$

\State $\bm{t} \leftarrow \{0\}^{|C|}$
\State \textit{Client:}
\ForAll{client $c$ in $C$}
  \State $\gamma_{up}^t, \gamma_{down}^t \leftarrow \textrm{solve Equation~\ref{eq:opt}}$
  \State $ \{ \bm{g}_{c,i} \}_{i=1}^{\gamma_{up}|\bm{\theta}_c|} \leftarrow \mathcal{L}_{c}(\{\bm{\theta}^{t}_{c,i}\}_{i=1}^{\gamma_{up}|\bm{\theta}_c|};\mathcal{D}_c)$  
  \State $ \bm{\theta}^{(t+1)}_{c} \leftarrow \bm{\theta}^{(t)}_{c} + \alpha \cdot \bm{g}_{c}$
  \State $t \leftarrow t + 1$
  \If{$t \ \textrm{mod} \ T_{lt} == 0$}
    \State $\bm{\theta}_{sel} \xleftarrow{e^{\frac{\epsilon_1 \cdot u_{\bm{\theta}}}{2\Delta_{u}}}} \bm{\theta}^{(t)}_{c}$
    \State $\bm{z}^{|\bm{\theta}_{sel}|} \xleftarrow{\$} \textrm{Lap}(k\Delta_{C}/\epsilon_{2})$
    \State $\bm{\theta}_{sel} \leftarrow \bm{\theta}_{sel} + \bm{z}$
    \State send $\bm{\theta}_{sel}$ and resource status $RS^{(t)}_{c}$ to $CL$
  \EndIf
\EndFor

\State \textit{Leader} $CL$:
\While{\textsf{true}}
  \State receive $\bm{\theta}_{sel}$ and $RS^{(t)}_{c}$ from client $c$
  \State $\bar{\bm{\theta}} \leftarrow \bar{\bm{\theta}} + 1/|C| \cdot \bm{\theta}_{sel}$
  \State $RS_{c} \leftarrow RS_{c} \cup RS^{(t)}_{c}$
\EndWhile
\State \Return $\bar{\bm{\theta}}$
\end{algorithmic}
\end{algorithm}

\subsection{Inter-Cluster Aggregation}
The inter-cluster phase consists of two main steps: inter-cluster aggregation and re-clustering. Inter-cluster aggregation basically mirrors the original FL synchronous aggregation. For security concerns, we exploit a threshold FHE to protect leader models from the semi-honest PS. We choose FHE to construct secure aggregation rather than MPC because OFL also considers model aggregation security. An efficient model poisoning detection technique is developed, sanitizing the poisoned model through frequency analysis on ciphertexts. The re-clustering process groups clients with similar characteristics periodically. The features for clustering include device specifics, resource status, and model states.

Different from classical FHE schemes, a threshold FHE (ThFHE) scheme \cite{boneh2018threshold} allows multiple parties holding secret key shares to decrypt ciphertext partially, which is a promising tool to outsource computation for multiple clients with private data. Recently, a low-communication ThFHE \cite{passelegue2025low} is designed to provide an IND-CPA security guarantee with an additional procedure adding crucial randomness. In OFL, we alter the ThFHE scheme to fit our security requirement, i.e., each client's data security against the semi-honest PS and semi-honest clients. We note that the data security guarantee of OFL is stronger than the existing work from two perspectives: (1) OFL is resistant to the collusion of the semi-honest PS and clients, and (2) OFL supports a lightweight but accurate analysis method to detect and sanitize model poisoning attacks.


The most common defense against poisoning attacks is to detect outliers during the aggregation. However, it is rather challenging to detect the encrypted models efficiently. Inspired by a backdoor detection method in deep neural networks \cite{zeng2021rethinking}, we find the feasibility of detecting malicious behaviors in model aggregation through frequency analysis. Previous robust FL aggregation methods commonly use distance or similarity measurements for anomaly detection \cite{zhao2021sear,mao2021romoa,xu2025dual}, which suffers from intensive computation due to pairwise comparisons and performs unsatisfactorily in heterogeneous data settings. Considering the data heterogeneity and resource limitation, we propose a frequency analysis-aided secure model aggregation method in the inter-cluster phase.

In OFL, all parameters of each leader model are transformed into the frequency domain via fast Fourier transformation (FFT). In the frequency domain, the PS can easily tell high-frequency components in each model. According to studies in backdoor detection \cite{zeng2021rethinking}, high-frequency components are highly relevant to poisoning attacks. Therefore, model parameters with high frequency will be seen as suspicious. Please note this detection is done solely, without comparison with other models. Therefore, our method can be implemented in $\mathcal{O}(\log_{2} |\bm{\theta}|)$ complexity based on the technique in \cite{cheon2018faster}. To further consider the correlation between models, we use the Manhattan distance between the global model and each leader model to detect relative outliers. By combining the frequency analysis result with the Manhattan distance measurement, we design a sanitizing factor for secure aggregation as
\begin{equation}
\label{eq:sfactor}
\begin{split}
\llbracket \textrm{s-factor}_{i} \rrbracket & = 1-  MagNorm( \textsf{FHE-DFT}(\llbracket \bar{\bm{\theta}}_{i} \rrbracket)) \\ 
& \times DistNorm(\textsf{FHE-MD}(\llbracket \bar{\bm{\theta}}_{i} \rrbracket, \llbracket {\bm{\theta}}_{global} \rrbracket)),
\end{split}
\end{equation}
where \textsf{FHE-DFT} is a discrete Fourier transform algorithm in FHE ciphertext space, and its implementation is given in Algorithm~\ref{alg:dft}. The $MagNorm()$ is a max-min normalization process of frequency magnitude. \textsf{FHE-MD} is a Manhattan distance algorithm in FHE ciphertext space, i.e., $\textsf{FHE-MD}(\llbracket \bar{\bm{\theta}}_{i} \rrbracket$, $\llbracket {\bm{\theta}}_{global} \rrbracket) = \sum_{j=1}^{| \bar{\bm{\theta}}_i|} | \llbracket \bar{\theta}_{i,j} \rrbracket - \llbracket {\theta}_{global,j} \rrbracket |$. The $DistNorm()$ method is a max-min normalization process of distances.
\begin{algorithm}
\caption{FHE-DFT}
\label{alg:dft}
\begin{algorithmic}[1]
\Require cluster leader $CL_{i}$'s encrypted model $\llbracket \bar{\bm{\theta}}_{i} \rrbracket$
\Ensure an encrypted model in the frequency domain
\State $n \leftarrow |\bar{\bm{\theta}}_i |$
\For{$i=0$ to $n$}
  \State $\omega_n \leftarrow e^{2\pi i/n}$
  \State $\bm{W}_{n/2} \leftarrow \textrm{diag}(1,\omega_n,\omega_n^{2},\ldots,\omega_n^{n/2-1})$
  \State $\bm{D}_{k}= [[\bm{I}_{n/k},\bm{I}_{n/k}], [\bm{W}_{n/k},\bm{W}_{n/k}]] \bm{I}_{n}$
\EndFor
\For{$i=1$ to $\log_2n$}
  \State $\llbracket \tilde{\bm{\theta}}_{0} \rrbracket \leftarrow \textsf{CMult}(\textrm{diag}_0(\bm{D}_{2^i}),\llbracket \tilde{\bm{\theta}} \rrbracket)$
  \State $\llbracket \tilde{\bm{\theta}}_{1} \rrbracket \leftarrow \textsf{LeftRotate}(\llbracket \tilde{\bm{\theta}} \rrbracket, n/2^{i})$
  \State $\llbracket \tilde{\bm{\theta}}_{2} \rrbracket \leftarrow \textsf{RightRotate}(\llbracket \tilde{\bm{\theta}} \rrbracket, n/2^{i})$
  \State $\llbracket \tilde{\bm{\theta}}_{1} \rrbracket \leftarrow \textsf{CMult}(\textrm{diag}_{n/2^{i}}(\bm{D}_{2^i}),\llbracket \tilde{\bm{\theta}}_{1} \rrbracket)$
  \State $\llbracket \tilde{\bm{\theta}}_{2} \rrbracket \leftarrow \textsf{CMult}(\textrm{diag}_{n-n/2^{i}}(\bm{D}_{2^i}),\llbracket \tilde{\bm{\theta}}_{2} \rrbracket)$
  \State $\llbracket \tilde{\bm{\theta}} \rrbracket \leftarrow \textsf{Add}(\llbracket \tilde{\bm{\theta}}_{0} \rrbracket,\llbracket \tilde{\bm{\theta}}_{1} \rrbracket, \llbracket \tilde{\bm{\theta}}_{2} \rrbracket)$
\EndFor
\end{algorithmic}
\end{algorithm}

We now give the construction of the main FHE algorithms used in OFL.
\begin{itemize}
    \item \textsf{KeyGen}($1^{\lambda},1^{N}$)\footnote{We note that FHE keys can be generated and distributed by a trusted CA or via multi-party computation protocol, which is out of our discussion.}: Given the security parameter $\lambda$, let $q_i = p^{i}$, $i \in [1,L]$, $q_{L} = pq_{dec}$ with $q_{dec} \ll q_{L}$, and choose a power-of-two integer $N$, an integer $h$, an integer $P > q_{L}$, and a real number $\sigma >0$ to achieve $\lambda$-bit security level. Sample $s \leftarrow \mathcal{HWT}(h)$, $a \leftarrow \mathcal{R}_{q_{L}}$, and $e \leftarrow \mathcal{DG}(\sigma^{2})$, where $\mathcal{HWT}(h)$ is the set of signed binary vectors in $\{ 0, \pm1 \}^{N}$ with weight $h$ and $\mathcal{DG}(\sigma^{2})$ samples independently from the discrete Gaussian distribution of variance $\sigma^{2}$. Set the secret key as $\textsf{sk} \leftarrow (1,s)$ and the public key as $\textsf{pk} \leftarrow (b,a) \in \mathcal{R}_{q_{L}}^{2}$, where $b \leftarrow -a \cdot s+e \in \mathcal{R}_{q_{L}}$. Sample secret key shares $(\textsf{sk}_{1}, \textsf{sk}_{2},\ldots,\textsf{sk}_{n-1}) \leftarrow \mathbb{Z}_{q_{dec}}^{n-1}$ so that $\textsf{sk} = \textsf{sk}_{1}+\textsf{sk}_{2}+\cdots+\textsf{sk}_{n} \ \textrm{mod} \ q_{dec}$.
    \item \textsf{Enc}(\textsf{pk},m): Encode $m$ by following the encoding process given in \cite{cheon2017homomorphic} and get $m(x) \leftarrow \textsf{Encode}(x)$. Sample $v \leftarrow \mathcal{ZO}(\rho)$ and $e_1,e_2 \leftarrow \mathcal{DG}(\sigma^{2})$, where $\mathcal{ZO}(\rho)$ samples a vector from $\{ 0, \pm1 \}^{N}$ with probability $\rho/2$ of $\pm1$ and probability $1-\rho$ of $0$, and return $\textsf{ct}^{(1,2)} \leftarrow (\lfloor p^{k} \cdot m(x) \rceil + v \cdot \textsf{pk}_a + e_1 \ \textrm{mod} \ q_{L}, v \cdot \textsf{pk}_b + e_2 \ \textrm{mod} \ q_{L})$.
    \item \textsf{ServerDec}(\textsf{pk},\textsf{ct}): Get a fresh ciphertext $\textsf{ct}_{fresh} \leftarrow \textsf{Enc}_{\textsf{pk}}(0)$ and add it to the input ciphertext, $\textsf{ct} \leftarrow \textsf{ct}+\textsf{ct}_{fresh}$. Sample flood noise $\epsilon \leftarrow \mathcal{D}_{\sigma_{flood}}$ and add it to the input ciphertext $\textsf{ct}^{(1)} \leftarrow \textsf{ct}^{(1)} + \epsilon$. Return the server decrypted ciphertext $\textsf{ct}_{dec}^{(1,2)} \leftarrow (\lfloor 1/p \cdot \textsf{ct}^{(1)} \rceil_{\sigma_1} \ \textrm{mod} \ q_{dec}$, $\lfloor 1/p \cdot \textsf{ct}^{(2)} \rceil_{\sigma_2} \ \textrm{mod} \ q_{dec})$.
    \item \textsf{PartDec}($\textsf{sk}_i$,$\textsf{ct}_{dec}$): Sample noise $z_i \leftarrow \mathcal{D}_{\mathbb{Z},\eta}$ and add it to the server decrypted ciphertext. Return $\textsf{pd}_{i} \leftarrow \textsf{ct}_{dec}^{(1)} \cdot \textsf{sk}_{i} + z_i \ \textrm{mod} \ q_{dec}$.
    \item \textsf{FinDec}($\textsf{ct}_{dec}$, $\textsf{pd}_1,\textsf{pd}_2,\ldots,\textsf{pd}_N$): Aggregate partial decryption shares $\textsf{pt} \leftarrow \sum_{i=1}^{n} \textsf{pd}_{i} + \textsf{ct}_{dec}^{(2)} \ \textrm{mod} \ q_{dec}$. Return final decryption result after decoding as $m^{\prime} \leftarrow \textsf{Decode}(\textsf{pt})$.
\end{itemize}

As for the global clients' re-clustering process, it is the PS's job to collect clients' status reports and form new clusters for active clients. As mentioned before, OFL takes into account multiple factors for clustering, including device specifics, resource status, and model states. The server will keep tracking clients' status reports and run the re-clustering process periodically. Since clustering algorithms have been well studied, we generally apply dynamic Kmeans in OFL. We summarize essential steps of the global aggregation in Algorithm~\ref{alg:inter}.
\begin{algorithm}[ht]
\caption{InterAgg: inter-cluster aggregation}
\label{alg:inter}
\begin{algorithmic}[1]
\Require cluster leader, leader model, and resource status of each cluster $\{ CL_i,\bar{\bm{\theta}}_{i}^{(t)}, RS^{(t)}_{i}\}_{i=1}^{k}$
\Ensure global model parameters $\bm{\theta}_{global}$
\State initialize $t \leftarrow 0$
\State \textit{Cluster Leader $CL$}:
\For{$CL_{i}$, $i \in[1,k]$}
  \State $\llbracket \bar{\bm{\theta}}_i \rrbracket \leftarrow \textsf{Enc}(\textsf{pk},\bar{\bm{\theta}}_i)$
  \State send $\llbracket \bar{\bm{\theta}}_i \rrbracket$ and $\bm{RS}_{i}$ to the server \Comment{wait for server}
  \State receive $\textsf{ct}_{dec}$ from the server
  \State $\textsf{pd}_{i} \leftarrow \textsf{PartDec}(\textsf{sk}_i,\textsf{ct}_{dec})$
  \State send $\textsf{pd}_{i}$ to the server
\EndFor
\State \textit{Server}:
\State collect $\llbracket \bar{\bm{\theta}}_i \rrbracket$ and $\bm{RS}_{i}$ from $\{ CL_{i} \}_{i=1}^{k}$, \ $t \leftarrow t+1$
\State calculate s-factor using Equation~\ref{eq:sfactor}
\State $\llbracket \bm{\theta}_{global} \rrbracket \leftarrow 1/n \sum_{i=1}^{n} \textrm{s-factor}_{i} \cdot \llbracket \bar{\bm{\theta}}_i \rrbracket$
\State $\textsf{ct}_{dec}$ $\leftarrow$ \textsf{ServerDec}(\textsf{pk},$\llbracket \bm{\theta}_{global} \rrbracket$)
\State send $\textsf{ct}_{dec}$ to $\{ CL_{i} \}_{i=1}^{k}$ \Comment{wait for $CL$s}
\State collect $\textsf{pd}_{i}$ from $\{ CL_{i} \}_{i=1}^{k}$
\State $\hat{\bm{\theta}}_{global} \leftarrow \textsf{FinDec}(\textsf{ct}_{dec}, \textsf{pd}_1,\textsf{pd}_2,\ldots,\textsf{pd}_k)$
\State update client status records $\bm{CS} \leftarrow \bm{CS} \cup \{ \bm{RS}_{i} \}_{i=1}^{k}$
\State \Return $\bm{\theta}_{global} \leftarrow \hat{\bm{\theta}}_{global}$
\end{algorithmic}
\end{algorithm}

\section{Theoretical Analysis}
\label{analysis}
\subsection{Security Analysis}
The security guarantee of OFL stands on the security of two phases, i.e., intra-cluster aggregation and inter-cluster aggregation, which will be discussed separately. The conclusion is that intra-cluster aggregation satisfies $\epsilon$-DP, while inter-cluster aggregation has chosen-plaintext attack (CPA) security.

\subsubsection{Intra-Cluster Aggregation}
To precisely analyze the security guarantee, we divide Algorithm~\ref{alg:intra} into two parts: selection and perturbation. Supposing that $\epsilon$ is the privacy budget for the entire algorithm, the budget can be divided into $\gamma_{up}|\bm{\theta}|\epsilon_1$ and $\gamma_{up} |\bm{\theta}| \epsilon_2$, $\epsilon = \gamma_{up}|\bm{\theta}|\epsilon_1 + \gamma_{up} |\bm{\theta}| \epsilon_2$. Then we have the following two lemmas:
\begin{lemma} \label{lemma1}
The selection process is $ \gamma_{up}|\bm{\theta}|\epsilon_1$-DP.
\end{lemma}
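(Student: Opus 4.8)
The plan is to reduce the claim to the classical exponential mechanism combined with sequential composition. First I would recall that the per-parameter selection rule used in Algorithm~\ref{alg:intra}, which retains a parameter $\theta$ with probability proportional to $\exp\!\left(\tfrac{\epsilon_1 u_\theta}{2\Delta_u}\right)$, is precisely an instance of the exponential mechanism of \cite{mcsherry2007mechanism}, with utility score $u_{\bm{\theta}} = \textrm{Norm}(\bm{\theta})$ derived from the o-factor and sensitivity $\Delta_u = \max_{\theta_0,\theta_1 \in \bm{\theta}} |u_{\theta_0} - u_{\theta_1}|$. Hence a \emph{single} selection enjoys the standard $\epsilon_1$-DP guarantee.

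To make this concrete I would then reproduce the usual ratio argument for a single selection: for neighboring datasets $\mathcal{D}_1,\mathcal{D}_2$ and any output $\theta$, the ratio $\Pr[\mathcal{A}(\mathcal{D}_1)=\theta]/\Pr[\mathcal{A}(\mathcal{D}_2)=\theta]$ splits into an exponentiated-utility factor bounded by $\exp(\epsilon_1/2)$ (the two exponents differ by at most $\tfrac{\epsilon_1}{2\Delta_u}\cdot\Delta_u$) and a normalizing-constant factor also bounded by $\exp(\epsilon_1/2)$; their product gives the $\exp(\epsilon_1)$ bound, establishing $\epsilon_1$-DP for one draw.

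Next I would note that the selection step actually emits the entire set $\bm{\theta}_{sel}$ of size $\gamma_{up}|\bm{\theta}|$, which I model as $\gamma_{up}|\bm{\theta}|$ applications of the exponential mechanism to the same private dataset $\mathcal{D}_c$. Applying the basic sequential composition theorem for pure DP, under which privacy budgets add, the composite selection mechanism is $\bigl(\gamma_{up}|\bm{\theta}|\,\epsilon_1\bigr)$-DP, which is exactly the statement of the lemma.

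The step I expect to be the main obstacle is the composition argument rather than the single-query analysis: I must justify treating the production of $\bm{\theta}_{sel}$ as a clean composition of $\gamma_{up}|\bm{\theta}|$ exponential-mechanism calls, so that the without-replacement selection of distinct indices still obeys additive composition. A secondary subtlety is verifying that the range-based sensitivity $\Delta_u$ adopted in the paper genuinely upper-bounds the utility change induced by swapping one data point, which is precisely what the exponential-mechanism hypothesis requires.
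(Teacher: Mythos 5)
Your proposal is correct and follows essentially the same route as the paper's proof: a single draw is $\epsilon_1$-DP as an instance of the exponential mechanism, and the release of the set $\bm{\theta}_{sel}$ of size $\gamma_{up}|\bm{\theta}|$ is handled by basic sequential composition, yielding $\gamma_{up}|\bm{\theta}|\epsilon_1$-DP. The two subtleties you flag (composition under without-replacement selection and the validity of the range-based sensitivity $\Delta_u$) are likewise left implicit in the paper, which simply invokes the composition theorem of \cite{dwork2014algorithmic}.
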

\begin{lemma} \label{lemma2}
The perturbation process is $ \gamma_{up} |\bm{\theta}|  \epsilon_2$-DP.
\end{lemma}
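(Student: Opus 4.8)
The plan is to view the perturbation process as a sequential composition of single-parameter Laplace mechanisms and to bound the total budget accordingly. First I would fix notation matching Algorithm~\ref{alg:intra}: after selection the $\gamma_{up}|\bm{\theta}|$ surviving parameters are partitioned by \textrm{KDE} into clusters $PC_1,\ldots,PC_k$ with centroids $\omega_1,\ldots,\omega_k$, and each $\theta \in PC_i$ is released as $\theta + z$ with $z \sim \textrm{Lap}(\Delta_{PC_i}/\epsilon_{PC_i})$, where $\Delta_{PC_i} = 2\max_{\theta_0 \in PC_i}|\theta_0-\omega_i|$ and $\sum_{i=1}^k \epsilon_{PC_i} = \epsilon_2$. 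The proof then reduces to three steps: (i) a sensitivity computation, (ii) the per-parameter Laplace guarantee, and (iii) composition over all selected parameters.

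The main technical content is the sensitivity computation. Following the centroid-referencing idea of \cite{mao2021secure}, the quantity actually perturbed for a parameter $\theta \in PC_i$ is its offset from the cluster centroid $\omega_i$. I would bound the $\ell_1$-sensitivity of this offset: any two admissible parameter values $\theta_0,\theta_0'$ assigned to $PC_i$ lie within radius $\max_{\theta_0 \in PC_i}|\theta_0-\omega_i|$ of $\omega_i$, so by the triangle inequality $|\theta_0-\theta_0'| \le 2\max_{\theta_0 \in PC_i}|\theta_0-\omega_i| = \Delta_{PC_i}$. This centering is precisely what tames the otherwise unbounded variance of raw parameters flagged earlier in the paper, yielding a finite sensitivity $\Delta_{PC_i}$ for the per-cluster release.

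Given the sensitivity, step (ii) is the textbook Laplace-mechanism theorem \cite{dwork2014algorithmic}: adding $\textrm{Lap}(\Delta_{PC_i}/\epsilon_{PC_i})$ to a query of sensitivity $\Delta_{PC_i}$ yields $\epsilon_{PC_i}$-DP for releasing a single parameter in $PC_i$. Since $\epsilon_{PC_i} \le \sum_{j} \epsilon_{PC_j} = \epsilon_2$, each individual release is in particular at most $\epsilon_2$-DP. Step (iii) then invokes the basic sequential composition theorem: because every selected parameter depends on the full local dataset $\mathcal{D}_c$, parallel composition across clusters does not apply, and I must compose all $\gamma_{up}|\bm{\theta}|$ releases sequentially, each at most $\epsilon_2$-DP, which gives the claimed $\gamma_{up}|\bm{\theta}|\epsilon_2$-DP bound.

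I expect the sensitivity bound to be the crux. The delicate point is that changing one data record could move the trained parameter across a cluster boundary, so that both the reference centroid and the admissible radius change; to keep $\Delta_{PC_i}$ a legitimate worst-case sensitivity I must either argue the \textrm{KDE} assignment is stable under a single-record change, or—as Algorithm~\ref{alg:intra} effectively does by using the uniform scale $k\Delta_{C}/\epsilon_2$—adopt $\Delta_C = \max_i \Delta_{PC_i}$ so the bound holds irrespective of reassignment. A secondary concern is that the noise scale is itself data-dependent through $\Delta_{PC_i}$, which I would neutralize by treating the cluster radii as publishable, dataset-independent parameters of the mechanism rather than queries on $\mathcal{D}_c$. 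Once the sensitivity is pinned down, steps (ii) and (iii) are routine.
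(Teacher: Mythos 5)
Your proof follows essentially the same route as the paper's: the centroid-offset sensitivity bound $\Delta_{PC_i} = 2\max_{\theta_0 \in PC_i}|\theta_0-\omega_i|$, the per-parameter Laplace guarantee, and simple sequential composition over the $\gamma_{up}|\bm{\theta}|$ selected parameters. The cluster-stability issue you correctly flag as the crux is handled in the paper by assuming the density-estimation clusters remain unchanged across adjacent iterations (justified by clipping and the learning rate), so your alternative of adopting the uniform worst-case scale $\Delta_C = \max_i \Delta_{PC_i}$ is a refinement of, not a departure from, the paper's argument.
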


\begin{proof}[Proof of Lemma \ref{lemma1}]
Given any neighboring datasets $\mathcal{D}$ and $\mathcal{D}^{\prime}$, the selection process is executed consecutively $ \gamma_{up} |\bm{\theta}|$ times, generating a sequence whose elements constitute the selected parameter set. The selection process using the exponential mechanism satisfies $\epsilon_1-DP$. Thus, by simple composition theorems \citep{dwork2014algorithmic}, we have that the selection process satisfies $ \gamma_{up}|\bm{\theta}|\epsilon_1$-DP. 
\end{proof}

\begin{proof}[Proof of Lemma \ref{lemma2}]
The perturbation process may be affected by density estimation. Since parameter updates in each iteration can be bounded by clipping and learning rate, we assume that the clusters yielded by density estimation remain unchanged in adjacent iterations. Then, $\Delta f$ of the model parameter in cluster $i$ has $\Delta f_i \le  2\max_{\theta' \in PC_i}|\theta' - \omega_i| = \Delta_{PC_i}$. The perturbation process executes The Laplace mechanism on each element in the selected parameter set with $\Delta f=\Delta _{PC_i}$ and ${\epsilon_2\over k} < \epsilon_2$, and since the $\Delta f$ of all parameters within $PC_i$ have limitation $\Delta_{PC_i}$, the perturbation process satisfies $\epsilon_2-DP$ for any single parameter. The selected parameter set has at most $ \gamma_{up} |\bm{\theta}|$ elements; therefore, by simple composition theorems, we find that the perturbation process satisfies $ \gamma_{up}|\bm{\theta}|\epsilon_2$-DP.
\end{proof}

\subsubsection{Inter-Cluster Aggregation}
We will show that the server and cluster leaders in OFL provide CPA security in the inter-cluster aggregation phase. Recall that the server and cluster leaders are considered semi-honest, faithfully executing their jobs while potentially being curious about confidential data. We omit the evaluation key $\textsf{ek}$ in the analysis and then obtain our security claims by additionally assuming that the security still holds even when the supplementary information contained in $\textsf{ek}$ is disclosed. This aligns with the standard circular security assumption. We begin with the definition of OW-CPA security as presented in \cite{passelegue2025low}.
\begin{definition}[OW-CPA security for ThFHE] We say that a threshold FHE scheme ThFHE = (\textsf{KeyGen}, \textsf{Enc}, \textsf{Eval}, \textsf{ServerDec}, \textsf{PartDec}, \textsf{FinDec}) is $Q_D-OW-CPA$ secure, if for all PPT adversaries $\mathcal{A} = (\mathcal{A}_0, \mathcal{A}_1)$ with query bound $Q_D$, we have $Adv_{ThFHE}^{Q_D-OW-CPA}(\mathcal{A}) := Pr[Expt_{\mathcal{A}, ThFHE}^{OW-CPA} (1^{\lambda}, 1^N) = 1] = negl(\lambda)$, where $Exp_{\mathcal{A}, ThFHE}^{OW-CPA}$ is the experiment in Figure~\ref{fig:OW-CPA}. We abbreviate it as OW-CPA security in the rest.
\end{definition}


\begin{figure}[ht]
\begin{minipage}{\columnwidth}
    \hrulefill
    \begin{algorithmic}
        \State \textbf{Exp}$_{ThFHE}^{OW-CPA}$:
        \State 1: ($\textsf{pk}, \textsf{ek}, \textsf{sk}_1, ..., \text{sk}_N) \leftarrow \textsf{KeyGen}(1^{\lambda}, 1^N$)
        \State 2: $\textsf{ctr} \leftarrow 0, \textsf{L} \leftarrow \phi $
        \State 3: $(\textsf{S}, \textsf{st}) \leftarrow \mathcal{A}_0(\textsf{pk})$ with $\textsf{S} \subset [N]$ and $|\textsf{S}| < N $
        \State 4: $\textsf{m} \leftarrow \mathcal{U} (\mathcal{M})$
        \State 5: $(b, \textsf{m}') \leftarrow \mathcal{A}_1^{\textbf{OEnc}, \textbf{OChalb}, \textbf{OEval}, \textbf{ODec}}(\textsf{pk}, \textsf{sk}_i(i \in \textsf{S}), \textsf{st})$
        \State \textbf{return} $\textsf{m}' = \textsf{m} \land b = 1$
    \end{algorithmic}
    \hrulefill
    \begin{algorithmic}
        \State \textbf{OEnc}($\textsf{m}$):
        \State 1: $\textsf{ct} \leftarrow \textsf{Enc}(\textsf{pk}, \textsf{m})$
        \State 2: $\textsf{ctr} \leftarrow \textsf{ctr} + 1$
        \State 3: $\textsf{L}[\textsf{ctr}] \leftarrow (0, \textsf{m}, \textsf{ct})$ 
        \State \textbf{return} $\textsf{ct}$
    \end{algorithmic}
    \vspace{3pt}
    \begin{algorithmic}
        \State \textbf{OChall}():
        \State 1: $\textsf{m} \leftarrow \mathcal{U} (\mathcal{M})$
        \State 2: $ \textsf{ct} \leftarrow \textsf{Enc}(\textsf{pk}, \textsf{m})$ 
        \State 3: $ \textsf{ctr} \leftarrow \textsf{ctr} + 1 $
        \State 4: $ \textsf{L}[\textsf{ctr}] \leftarrow (1, \textsf{m}, \textsf{ct}) $
        \State \textbf{return} $\textsf{ct}$
    \end{algorithmic}
    \vspace{3pt}
    \begin{algorithmic}
        \State \textbf{OEval}$(\textsf{fun}, (i_1, . . . , i_l))$:
        \State 1: \textbf{For} $j \in [l]$: $(b_j, \textsf{m}_j, \textsf{ct}_j) \leftarrow \textsf{L}[i_j]$
        \State 2: \textsf{ct} $\leftarrow$ \textsf{Eval}($\textsf{ek}, \textsf{fun}, \textsf{ct}_1, . . . , \textsf{ct}_l$)
        \State 3: $\textsf{m} \leftarrow \textsf{fun}(\textsf{m}_1, . . . , \textsf{m}_l)$
        \State 4: $b \leftarrow b_1 \lor ... \lor b_l$
        \State 5: $\textsf{ctr} \leftarrow \textsf{ctr} + 1 $
        \State 6: $\textsf{L}[\textsf{ctr}] \leftarrow (\textsf{m}, \textsf{ct})$
        \State \textbf{return} $\textsf{ct}$
    \end{algorithmic}
    \vspace{3pt}
    \begin{algorithmic}
        \State \textbf{ODec}$(j)$:
        \State 1: \textbf{If} $j > \textsf{ctr}$: \textbf{return} $\perp$ 
        \State 2: $(b, \textsf{m}, \textsf{ct}) \leftarrow \textsf{L}[j]$
        \State 3: $\textsf{ct}_{dec} \leftarrow \textsf{ServerDec}(\textsf{pk}, \textsf{ct})$ 
        \State 4: $\textsf{pd}_k \leftarrow \textsf{PartDec}(\textsf{sk}_k, \textsf{ct}_{dec})$, for $k \in [N]$
        \State \textbf{return} $(\textsf{ct}_{dec}, (\textsf{pd}_k)_{k\in[N]})$
    \end{algorithmic}
    \hrulefill
\end{minipage}
\caption{$Q_D$-OW-CPA security game for ThFHE.}
\label{fig:OW-CPA}
\end{figure}

\begin{definition}[IND-CPA secure for ThFHE] We say that a threshold FHE scheme ThFHE = (\textsf{KeyGen}, \textsf{Enc}, \textsf{Eval}, \textsf{ServerDec}, \textsf{PartDec}, \textsf{FinDec}) is $Q_D-IND-CPA$ secure if for all PPT adversaries $\mathcal{A} = (\mathcal{A}_0, \mathcal{A}_1)$ making at most $Q_D$ decryption queries, we have $Pr[A(Exp^{IND-CPA}_1(1^{\lambda}, 1^N)) = 1] - Pr[A$ $(Exp^{IND-CPA}_0(1^{\lambda}, 1^N )) = 1] \leq negl(\lambda)$, where the experiment is described in Figure~\ref{fig:IND-CPA}.
\end{definition}


\begin{figure}[ht]
\begin{minipage}{\columnwidth}
    \hrulefill
    \begin{algorithmic}
        \State \textbf{Exp}$^{Th-IND-CPA}_{b}(1^{\lambda}, 1^N)$:
        \State 1: $(\textsf{pk}, \textsf{ek}, \textsf{sk}_1, ..., \textsf{sk}_N ) \leftarrow \textsf{KeyGen}(1^{\lambda}, 1^N )$
        \State 2: $\textsf{ctr} \leftarrow 0, \textsf{L} \leftarrow \phi $
        \State 3: $(\textsf{S}, \textsf{st}) \leftarrow \mathcal{A}_0(\textsf{pk})$ with $\textsf{S} \subset [N]$ and $|\textsf{S}| < N $
        \State 4: $b \leftarrow \mathcal{U} ({0, 1})$
        \State 5: $b' \leftarrow \mathcal{A}_1^{\textbf{OEnc}, \textbf{OChalb}, \textbf{OEval}, \textbf{ODec}}(\textsf{pk}, \textsf{sk}_i(i \in \textsf{S}), \textsf{st})$
        \State \textbf{return} $b' = b$
    \end{algorithmic}
    \hrulefill
    \begin{algorithmic}
        \State \textbf{OEnc}$(\textsf{m})$:
        \State 1: $\textsf{ct} \leftarrow \textsf{Enc}(\textsf{pk}, \textsf{m})$
        \State 2: $\textsf{ctr} \leftarrow \textsf{ctr} + 1$
        \State 3: $\textsf{L}[\textsf{ctr}] \leftarrow (\textsf{m}, \textsf{m}, \textsf{ct})$ 
        \State \textbf{return} $\textsf{ct}$
    \end{algorithmic}
    \vspace{3pt}
    \begin{algorithmic}
        \State \textbf{OChall}$_b(\textsf{m}_0, \textsf{m}_1)$:
        \State 1: \textbf{If} $|\textsf{m}_0| \neq |\textsf{m}_1|$: \textbf{return} $\perp$
        \State 2: $ \textsf{ct} \leftarrow \textsf{Enc}(\textsf{pk}, \textsf{m}_b)$ 
        \State 3: $ \textsf{ctr} \leftarrow \textsf{ctr} + 1 $
        \State 4: $ \textsf{L}[\textsf{ctr}] \leftarrow (\textsf{m}_0, \textsf{m}_1, \textsf{ct}) $
        \State \textbf{return} $\textsf{ct}$
    \end{algorithmic}
    \vspace{3pt}
    \begin{algorithmic}
        \State \textbf{OEval}$(\textsf{fun}, (i_1, . . . , i_l))$:
        \State 1: \textbf{For} $j \in [l]: (\textsf{m}_{0,j}, \textsf{m}_{1,j}, \textsf{ct}_j ) \leftarrow \textsf{L}[i_j]$
        \State 2: $\textsf{ct} \leftarrow \textsf{Eval}(\textsf{ek}, \textsf{fun}, \textsf{ct}_1, . . . , \textsf{ct}_l)$
        \State 3: $\textsf{ctr} \leftarrow \textsf{ctr} + 1 $
        \State 4: $\textsf{m}_0 \leftarrow \textsf{fun}(\textsf{m}_{0,1}, . . . , \textsf{m}_{0,l})$
        \State 5: $\textsf{m}_1 \leftarrow \textsf{fun}(\textsf{m}_{1,1}, . . . , \textsf{m}_{1,l})$
        \State 6: $\textsf{L}[\textsf{ctr}] \leftarrow (\textsf{m}_0, \textsf{m}_1, \textsf{ct})$
        \State \textbf{return} $\textsf{ct}$
    \end{algorithmic}
    \vspace{3pt}
    \begin{algorithmic}
        \State \textbf{ODec}$(j)$:
        \State 1: \textbf{If} $j > \textsf{ctr}$: \textbf{return} $\perp$ 
        \State 2: $(\textsf{m}_0, \textsf{m}_1, \textsf{ct}) \leftarrow \textsf{L}[j]$
        \State 3: \textbf{If} $\textsf{m}_0 \neq \textsf{m}_1$: \textbf{return} $\perp$  
        \State 4: $\textsf{ct}_{dec} \leftarrow \textsf{ServerDec}(\textsf{pk}, \textsf{ct})$ 
        \State 5: $\textsf{pd}_k \leftarrow \textsf{PartDec}(\textsf{sk}_k, \textsf{ct}_{dec})$, for $k \in [N]$
        \State \textbf{return} $(\textsf{ct}_{dec}, (\textsf{pd}_k)_{k\in[N]})$
    \end{algorithmic}
    \hrulefill
\end{minipage}
\caption{$Q_D$-IND-CPA security game for ThFHE.}
\label{fig:IND-CPA}
\end{figure}

\begin{lemma} \label{lemma3} Let ThFHE be an OW-CPA secure threshold FHE scheme. Then Algorithm~\ref{alg:inter} is OW-CPA secure for the server and the cluster leaders.
\end{lemma}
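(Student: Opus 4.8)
The plan is to prove Lemma~\ref{lemma3} by a black-box reduction showing that any semi-honest coalition of the server and cluster leaders that recovers an honest leader's plaintext model in Algorithm~\ref{alg:inter} yields an adversary against the OW-CPA game of Figure~\ref{fig:OW-CPA}. Concretely, assume a PPT adversary $\mathcal{B}$ breaking the OW-CPA security of Algorithm~\ref{alg:inter}; I would build $\mathcal{A} = (\mathcal{A}_0, \mathcal{A}_1)$ against ThFHE. The coalition $\mathcal{B}$ controls the server and some subset of leaders, and the single honest leader model it attacks plays the role of the game's challenge message $\textsf{m}$. Because the threat model allows the server to collude with at most $N-1$ leaders, the corrupted set $\textsf{S}$ that $\mathcal{A}_0$ forwards to the challenger satisfies $|\textsf{S}| < N$, matching the admissibility condition and ensuring at least one honest share $\textsf{sk}_j$ stays outside $\mathcal{B}$'s view.

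Next I would spell out how $\mathcal{A}_1$ reconstructs the protocol transcript for $\mathcal{B}$ using only $\textsf{pk}$, the corrupted shares $\{\textsf{sk}_i\}_{i \in \textsf{S}}$, and the oracles. Each corrupted leader's ciphertext $\llbracket \bar{\bm{\theta}}_i \rrbracket$ is produced by running \textsf{Enc} locally under $\textsf{pk}$ (or via \textbf{OEnc}), while the honest leader's encryption is set to the challenge ciphertext returned by \textbf{OChall}. The entire server-side computation of Equation~\ref{eq:sfactor} — the \textsf{FHE-DFT} of Algorithm~\ref{alg:dft}, the \textsf{FHE-MD} distance, the $MagNorm$ and $DistNorm$ normalizations, and the weighted sum $\frac{1}{n}\sum_i \textrm{s-factor}_i \cdot \llbracket \bar{\bm{\theta}}_i \rrbracket$ — is a fixed composition of homomorphic additions and multiplications, hence encodable as one function \textsf{fun} handed to \textbf{OEval}; this yields the aggregate ciphertext $\llbracket \bm{\theta}_{global} \rrbracket$ together with the list bookkeeping the game maintains. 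Finally, the \textsf{ServerDec}/\textsf{PartDec}/\textsf{FinDec} round that the protocol runs on the aggregate is answered by an \textbf{ODec} query on the index of $\llbracket \bm{\theta}_{global} \rrbracket$, from which $\mathcal{A}_1$ extracts exactly the decrypted global model and the partial shares that $\mathcal{B}$ would observe in a real execution.

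To close the reduction, $\mathcal{A}_1$ outputs $\mathcal{B}$'s recovered model as $\textsf{m}'$ with the guess flag $b=1$. Provided the simulated view is identically distributed to a genuine run of Algorithm~\ref{alg:inter}, $\mathcal{A}$ succeeds whenever $\mathcal{B}$ does, so any non-negligible advantage of $\mathcal{B}$ contradicts the OW-CPA security of the underlying ThFHE. I would stress that $|\textsf{S}| < N$ is load-bearing: lacking the honest share $\textsf{sk}_j$, the coalition cannot invoke \textsf{FinDec} on any self-chosen ciphertext and therefore never learns a plaintext other than the aggregate $\bm{\theta}_{global}$ that the protocol itself discloses, which is exactly the collusion-resistance property claimed for OFL.

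I expect the decryption step to be the main obstacle, for two reasons. First, the aggregate is a challenge-dependent ciphertext, so I must argue that answering the protocol's decryption of $\bm{\theta}_{global}$ is a \emph{legal} interaction in the OW-CPA game — i.e., the game permits decrypting evaluated ciphertexts while forbidding decryption of the challenge encryption itself — and that this reveals only the aggregate, not the individual honest model. Second, the honest party's partial share $\textsf{pd}_j$ appears in $\mathcal{B}$'s real view but is not directly computable by $\mathcal{A}_1$; reproducing it faithfully relies on the noise-flooding injected by \textsf{ServerDec} and \textsf{PartDec}, which renders partial decryptions statistically simulatable given the output plaintext and the corrupted shares — a property inherited from the low-communication ThFHE of \cite{passelegue2025low}. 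Establishing these two points rigorously is the crux; the composition accounting and the final advantage bound are then routine.
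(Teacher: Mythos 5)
Your proposal is correct in substance and follows the same conceptual route as the paper---inheriting the OW-CPA security of Algorithm~\ref{alg:inter} from the underlying ThFHE scheme of \cite{passelegue2025low}---but it executes that route far more rigorously than the paper does. The paper's proof is a three-sentence observation: the inter-cluster transcript consists only of the encrypted leader models, $\textsf{ct}_{dec}$, $\textsf{pd}_i$, and $RS_i$; the resource reports $RS_i$ reveal nothing about plaintexts or keys; hence no party gains an advantage in the OW-CPA game. It never constructs an adversary, never explains how the honest leader's model is embedded as the game's challenge, and never addresses the decryption round. Your black-box reduction supplies exactly these missing pieces: corrupted leaders' ciphertexts via \textbf{OEnc}, the honest leader's via \textbf{OChall}, the server-side s-factor computation and weighted sum as a single $\textsf{fun}$ passed to \textbf{OEval}, and the \textsf{ServerDec}/\textsf{PartDec}/\textsf{FinDec} round via \textbf{ODec}; you also correctly isolate the two points where real work is needed, namely the legality of decrypting the challenge-dependent aggregate (legal here, since \textbf{ODec} in Figure~\ref{fig:OW-CPA} places no restriction on challenge-derived list entries---which is precisely why OW-CPA rather than IND-CPA is the attainable notion once the aggregate is decrypted) and the faithfulness of the honest leader's partial decryption share. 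On the latter, note a simplification: \textbf{ODec} already returns $(\textsf{pd}_k)_{k\in[N]}$, so you obtain the honest share directly; the only residual mismatch is that the oracle's $\textsf{ct}_{dec}$ is produced with the oracle's own randomness rather than with the coalition server's \textsf{ServerDec} coins, and since the server is semi-honest the two are identically distributed, so the substitution is sound (your noise-flooding simulatability argument also closes this, but is heavier than needed). Two small gaps relative to the paper: (i) your simulated view should also include the honest leader's $RS_j$, which the paper explicitly (if briefly) argues is independent of plaintexts and keys and can therefore be fabricated arbitrarily; (ii) you should state that ``OW-CPA security of Algorithm~\ref{alg:inter}'' can only mean recovery of an honest leader model treated as a uniformly random plaintext, because \textbf{OChall} encrypts a random message of the game's choosing rather than a value the reduction selects---a definitional caveat that the paper's lemma statement itself leaves implicit.
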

\begin{proof}[Proof of Lemma \ref{lemma3}]
Recall that the ThFHE scheme in \cite{passelegue2025low} satisfies the OW-CPA security. We omit the detailed proof here. During inter-cluster aggregation, the server and cluster leaders share $\llbracket \bar{\bm{\theta}}_i \rrbracket$, $\textsf{ct}_{dec}$, $\textsf{pd}_i$ and $RS_i$, while $RS_i$ does not reveal any additional information about the plaintext or secret key. Consequently, neither the server nor any cluster leader can gain any advantage about $\llbracket \bar{\bm{\theta}}_i \rrbracket$, $\textsf{ct}_{dec}$, and $\textsf{pd}_i$ in winning the OW-CPA game. Therefore, Algorithm~\ref{alg:inter} satisfies the OW-CPA security.
\end{proof}

\begin{lemma} \label{lemma4} Let ThFHE be an IND-CPA secure threshold FHE scheme. Then Algorithm~\ref{alg:inter} is Th-IND-CPA secure for cluster leaders.
\end{lemma}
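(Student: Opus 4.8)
The plan is to establish the lemma through a black-box reduction: assuming a PPT adversary $\mathcal{A}$ that wins the Th-IND-CPA game of Figure~\ref{fig:IND-CPA} against Algorithm~\ref{alg:inter} while corrupting a set $\textsf{S}$ of cluster leaders with $|\textsf{S}| < N$, I would build a PPT adversary $\mathcal{B}$ that wins the IND-CPA game against the underlying ThFHE scheme with the same advantage. Since the assumption makes the latter advantage negligible, the former is negligible as well. The guiding observation is that the entire transcript a coalition of cluster leaders observes during one execution of Algorithm~\ref{alg:inter} can be reconstructed by $\mathcal{B}$ using only the oracles $\textbf{OEnc}$, $\textbf{OChall}_b$, $\textbf{OEval}$, and $\textbf{ODec}$ provided by the ThFHE experiment, so $\mathcal{B}$ needs no secret information beyond what the game already grants.

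First I would fix the interface. $\mathcal{B}$ forwards the public key $\textsf{pk}$ it receives to $\mathcal{A}$; once $\mathcal{A}$ announces its corrupted leader set $\textsf{S}$ with $|\textsf{S}| < N$, $\mathcal{B}$ replays $\textsf{S}$ to its own challenger and returns the shares $\textsf{sk}_i$, $i \in \textsf{S}$, to $\mathcal{A}$. This step is admissible precisely because the game allows corrupting any strict subset of the $N$ share holders, which matches the OFL assumption that no coalition of leaders holds all key shares. The server itself is honest in this model, and $\mathcal{B}$ plays its role together with the honest leaders.

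Next I would simulate the protocol messages step by step. Encryptions of the honest leaders' models $\llbracket \bar{\bm{\theta}}_i \rrbracket$ are generated via $\textbf{OEnc}$, while the two candidate models on which $\mathcal{A}$ asks to be challenged are submitted to $\textbf{OChall}_b$. The server's homomorphic computation --- the s-factor evaluation of Equation~\ref{eq:sfactor} through \textsf{FHE-DFT} and \textsf{FHE-MD}, followed by the weighted aggregation $\llbracket \bm{\theta}_{global} \rrbracket \leftarrow \tfrac{1}{n}\sum_i \textrm{s-factor}_i \cdot \llbracket \bar{\bm{\theta}}_i \rrbracket$ --- is packaged as a single function and issued to $\textbf{OEval}$, which propagates both plaintext tracks of every intermediate ciphertext on $\mathcal{B}$'s behalf. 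The decryption path of Algorithm~\ref{alg:inter} that supplies $\textsf{ct}_{dec}$ to the leaders and collects the shares $\textsf{pd}_i$ is answered by a single $\textbf{ODec}$ query on the handle of the aggregated ciphertext, so the flooding noise of \textsf{ServerDec} and the share noise of \textsf{PartDec} are produced inside the oracle and need not be reproduced by $\mathcal{B}$. The resource reports $RS_i$ are plaintext-independent and are sampled by $\mathcal{B}$ itself, exactly as in the proof of Lemma~\ref{lemma3}.

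The hard part will be to account for the decryption restriction wired into $\textbf{ODec}$, whose step~3 returns $\perp$ whenever the two plaintext tracks of a queried ciphertext differ. Because $\mathcal{B}$ routes the protocol's decryption step entirely through $\textbf{ODec}$, this gating is inherited by the simulation: a $\perp$ is returned identically in both worlds and therefore leaks nothing about $b$, so $\mathcal{B}$'s emulation of $\mathcal{A}$'s view stays perfect regardless of the models chosen. I would stress that this is exactly why the guarantee is phrased ``for cluster leaders'': a leader coalition observes only its partial shares and the single decrypted aggregate, so the lemma promises indistinguishability of individual leader models precisely up to whatever their s-factor-weighted aggregate reveals --- and in honest executions that aggregate is the common global model, identical across both worlds. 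With the simulation perfect, $\mathcal{B}$ relays $\mathcal{A}$'s final bit and inherits its advantage, so any non-negligible advantage for $\mathcal{A}$ contradicts the IND-CPA security of ThFHE and proves the lemma.
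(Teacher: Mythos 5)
Your proof is correct and takes essentially the same route as the paper's: the paper's (much terser) argument is precisely that the messages cluster leaders send and receive in Algorithm~\ref{alg:inter} provide no information beyond what the ThFHE game of Figure~\ref{fig:IND-CPA} already grants -- so the scheme's IND-CPA security (obtained via \textsf{ServerDec}'s added randomness) is inherited, with the same caveat that the server stays honest and non-colluding -- and your black-box reduction with oracle simulation is a rigorous rendering of exactly that claim, including correctly inheriting the \textbf{ODec} gating. The only wobble is your aside that the decrypted aggregate is ``identical across both worlds'' in honest executions (it is not in general; it is identical exactly when \textbf{ODec} would answer rather than return $\perp$), but this does not affect the validity of the reduction itself.
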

\begin{proof}[Proof of Lemma \ref{lemma4}] To achieve IND-CPA security, additional randomness has been introduced by \textsf{ServerDec} operation. In Algorithm~\ref{alg:inter}, the messages transmitted and received by cluster leaders cannot provide additional information. Therefore, Algorithm~\ref{alg:inter} achieves IND-CPA security for cluster leaders.

Please note that the server performs \textsf{ServerDec} and \textsf{FinDec} in our scheme, granting it access to additional information. Given that the server is semi-honest, we assert that \textsf{ServerDec} remains functionally valid. If server is accessible to $\llbracket \bar{\bm{\theta}}_{global} \rrbracket$ and $\textsf{pd}_i$, the security will be degenerated. We allow the server to collude with some cluster leaders, thereby degenerating the security to OW-CPA security. However, we note that if an independent server is introduced for \textsf{ServerDec}, the PC can inherit the IND-CPA security in Algorithm~\ref{alg:inter}.
\end{proof}

\subsection{Complexity Analysis}
We will give the breakdown costs of each role in one round of OFL global training as interpreted in the Algorithm~\ref{alg:strawman}. For simplicity, we assume that $k$ clusters and leaders are fixed in the analysis and all clients use the same uploading and downloading rates, $\gamma_{up}$, $\gamma_{down}$, $k=\log_{2}N$. As for the FHE part, we average the costs of all FHE operations (including encryption, decryption, and evaluation) on input size $n$ as $\mathcal{O}(n)$, which is commonly used in related work.

If a client in OFL is not selected as a cluster leader, then its costs are optimistic. Recall algorithms of OFL, the client role undertakes three main jobs: local training, opportunistic uploading and downloading parameters. The perturbation process in opportunistic uploading mainly involves o-factor calculating and random sampling. For density estimation, we use the kernel output and its activation results as an approximation, which completely reuses the forward propagation of local training without any cost. Meanwhile, the uploading and downloading processes may be limited by communication resources. Therefore, the time complexity of each client is $\mathcal{O}(T_{lt} \cdot |\bm{\theta}|)$, while the communication complexity is $\mathcal{O}((\gamma_{up}+\gamma_{down}) \cdot |\bm{\theta}| )$. Both complexity expressions are only relevant to the model size and irrelevant to the client scale.

When a client becomes a cluster leader, there will be costs in addition to its client job. Each cluster leader mainly involves four steps: local model aggregation, uploading the encrypted leader model, partial decryption, and downloading the global model. We simply assume that the leader model encryption introduces $\mathcal{O}(|\bm{\theta}|)$ FHE encrypting operations without any optimizations like cipher packing. The partial decryption involves the same amount of FHE decrypting operations with additional upload and download of the partially decrypted model. At the end of each global synchronization, the global model should be forwarded to clients. Thus, the time complexity of a cluster leader is $\mathcal{O}( (\gamma_{up} \cdot T_{lt} \cdot |C| +2) \cdot |\bm{\theta}|)$. The communication complexity is $\mathcal{O}((\gamma_{up} \cdot T_{lt} \cdot |C| + \gamma_{down}\cdot |C| + 4) \cdot |\bm{\theta}|)$. When we set $k=\log_{2}N$, $|C| \leq N/\log_{2}N$.

The server undertakes the most aggregation work in OFL, including encrypted model aggregation and partial decryption aggregation. The client clustering process has a $\mathcal{O}(kN)$ time complexity. The encrypted model aggregation mainly involves the s-factor calculation and a weighted sum. The $\textsf{FHE-DFT}$ and $\textsf{FHE-MD}$ processes have about $5k \cdot \log_{2}|\bm{\theta}|$ FHE multiplications and $(k \cdot (|\bm{\theta}| +  \log_{2}|\bm{\theta}|)$ FHE additions. A naive weighted sum process has $k|\bm{\theta}|$ FHE multiplications and $k|\bm{\theta}|$ additions. The $\textsf{ServerDec}$ process involves $|\bm{\theta}|$ encryptions, $2|\bm{\theta}|$ additions and $|\bm{\theta}|$ decryptions. Thus, the server's time complexity is about $\mathcal{O}((3k+4) \cdot |\bm{\theta}| + 6k \cdot \log_{2}|\bm{\theta}|)$. The communication complexity is about $\mathcal{O}(4k \cdot |\bm{\theta}|)$. Since $k=\log_{2}N$, both complexity expressions are logarithmic to the client scale and linear to the model size, indicating OFL's good scalability.

\section{Evaluation}
\subsection{Implementation and Setup}
\label{sec: experimental setup}
We have implemented an OFL prototype on a real-world testbed, as illustrated in Figure~\ref{fig:ofl-realworld}. This testbed comprises 16 Jetson Xavier NX development boards as resource-limited devices and a high-performance server. All devices are connected to a LAN through a switch. To mimic heterogeneous devices, we restrict the capability of development boards by adjusting operating frequency and available units. Table~\ref{tab:setting} gives four types of heterogeneous devices\footnote{The device has a higher maximal frequency when 2 cores are online than that of 4 or more CPU cores.} used for evaluation. The central server has 8 Intel Core i7-9700 CPUs and 32GB RAM.
\begin{figure}[ht]
\centering
\includegraphics[width=\linewidth]{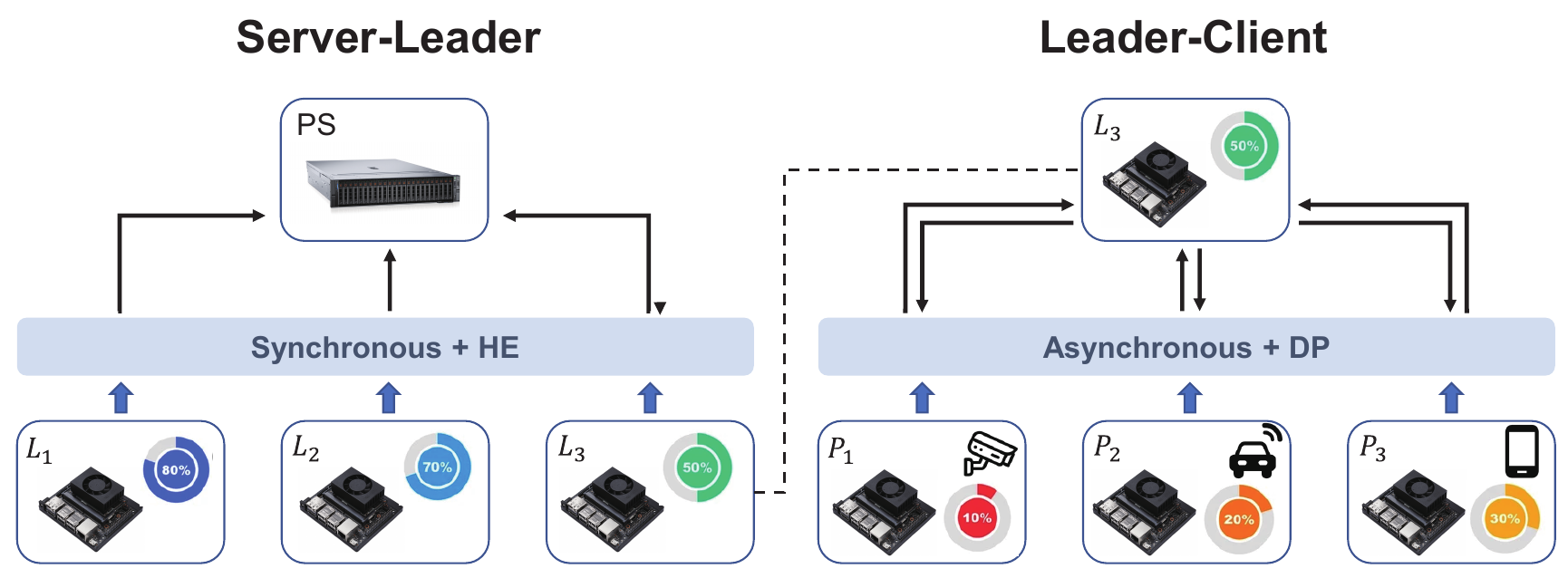}
\caption{Real-world testbed for OFL evaluation.}
\label{fig:ofl-realworld}
\end{figure}

\begin{table}[ht]
\centering
\caption{Four types of heterogeneous devices.}
\label{tab:setting}
{\scriptsize
\begin{tabular}{ccccc}
\toprule
Type & \textbf{A} & \textbf{B} & \textbf{C} & \textbf{D} \\
\midrule
Power & 10W & 10W & 15W & 20W \\
CPU Cores & 2 & 4 & 4 & 6 \\
Max CPU Freq. (GHz) & 1.497 & 1.190 & 1.420 & 1.420 \\
Max GPU Freq. (MHz) & 803.25 & 803.25 & 1,109.25 & 1,109.25 \\
Max EMC Freq. (MHz) & 1,600 & 1,600 & 1,600 & 1,866 \\
\bottomrule
\end{tabular}
}
\end{table}

We evaluate OFL on different tasks. Image classification tasks use MNIST \cite{deng2012mnist}, FMNIST \cite{xiao2017/online}, and SVHN \cite{netzer2011reading} datasets for training CNN models with two 784-neuron convolutional layers and one 1024-neuron fully connected layer. Graph classification tasks use Cora \cite{mccallum2000automating}, Citeseer \cite{giles1998citeseer}, and PubMed \cite{sen2008collective} datasets for training models with a 2-layer GCN and one 512-neuron fully connected layer. In the data preprocessing, each image dataset is divided into multiple subsets independently to form non-i.i.d. datasets using Dirichlet distributions \cite{9835537}, while each graph dataset uses data enhancement techniques to produce diverse datasets. The enhancement is implemented by combining the graph splitting method and contrastive learning \cite{you2020graph}.

We evaluate OFL specifically on two critical aspects: model performance and resource efficiency. Our evaluation aims to provide a deep understanding of how OFL performs in resource-heterogeneous and privacy-aware cases. Since security guarantees can be proved theoretically, we choose three frameworks for performance comparison, i.e., the original FL\cite{mcmahan2017communication}, CFL \cite{sattler2020clustered}, and REFL \cite{wang2021resource}. We have also implemented a simulation based evaluation for large-scale FL cases, investigating OFL performance and its scalability. This simulation is conducted on the high-performance server with a lightweight thread simulating each client.

\subsubsection{Training Performance}
We first give an overall result of training performance in Table~\ref{tab:utility}. The model accuracy of OFL surpasses the original FL baseline on the three graph datasets (Cora, Citeseer, and PubMed). Across five out of the six datasets excluding FMNIST, OFL either matches or exceeds the performance of CFL. More specifically, OFL demonstrates significantly higher accuracy compared to both CFL and REFL on the graph datasets. On the image datasets (MNIST, FMNIST, and SVHN), OFL is highly competitive with the Original FL, outperforming CFL and REFL in most cases. Figure~\ref{fig:ofl-relwork} provides detailed training curves, illustrating convergence speed. We can tell that OFL is the closest to the original FL, converging faster than CFL and REFL. Additionally, OFL mitigates the overfitting issues in the original FL by using the opportunistic uploading strategy.
\begin{table}[ht]
\centering
\caption{Model accuracy (\%) of OFL on different datasets.}
\label{tab:utility}
{\scriptsize
\begin{tabular}{ccccc}
\toprule
Solution & original FL \cite{mcmahan2017communication} & CFL \cite{sattler2020clustered} & REFL \cite{wang2021resource} & \textbf{OFL}\\
\midrule
Cora & 83.41 $\pm$ 0.03 & 81.31 $\pm$ 0.43 & 76.74 $\pm$ 1.55 & \textbf{85.91 $\pm$ 0.11} \\
Citeseer & 71.20 $\pm$ 0.07 & 69.26 $\pm$ 0.22 & 72.45 $\pm$ 1.88 & \textbf{76.90 $\pm$ 0.09} \\
PubMed & 80.20 $\pm$ 0.01 & 78.55 $\pm$ 0.15 & 75.35 $\pm$ 2.66 & \textbf{85.23 $\pm$ 0.07} \\
MNIST & \textbf{97.70 $\pm$ 0.00} & 91.77 $\pm$ 3.13 & 91.17 $\pm$ 2.34 & 94.17 $\pm$ 0.19 \\
FMNIST & \textbf{84.28 $\pm$ 0.27} & 83.15 $\pm$ 2.93 & 73.67 $\pm$ 2.23 & 81.01 $\pm$ 0.08 \\
SVHN & \textbf{76.39 $\pm$ 0.70} & 73.65 $\pm$ 2.43 & 67.09 $\pm$ 2.18 & 73.60 $\pm$ 0.51 \\
\bottomrule
\end{tabular}
}%
\end{table}
\begin{figure}[ht]
\centering
\includegraphics[width=0.9\linewidth]{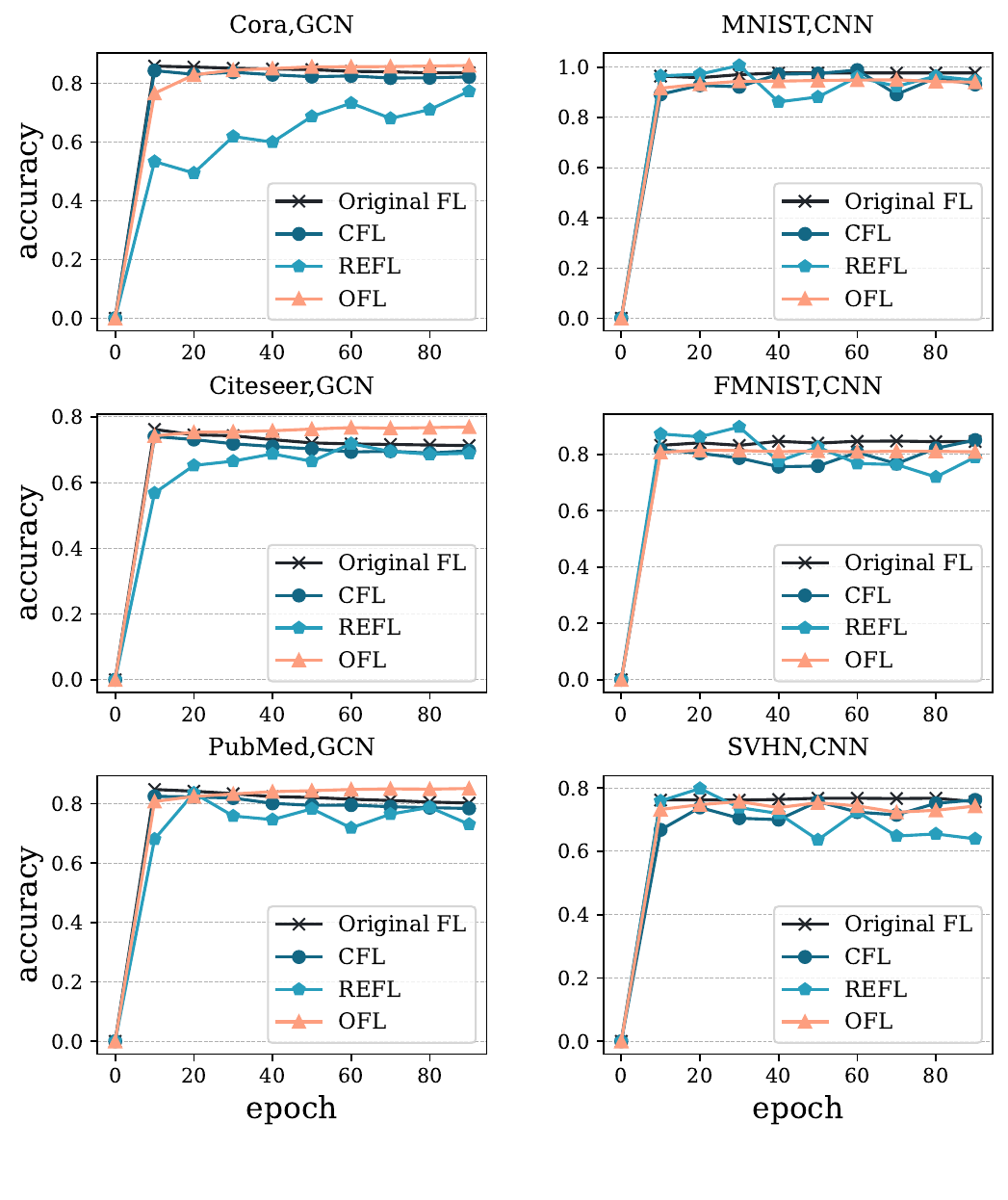}
\caption{Comparison of learning curves.}
\label{fig:ofl-relwork}
\end{figure}

We evaluate the training performance regarding different client scales. On the real-world testbed, we use $N=\{8,12,16\}$. In each case, heterogeneous devices of type A, B, C, and D have equal numbers, i.e., $\{2,3,4\}$. The results are presented in Figure~\ref{fig:ofl-n-comp-jetson}. The training processes of different client scales converge similarly, indicating that OFL's training performance is stable in small-scale tasks. To further investigate OFL's scalability, we conduct a simulation-based evaluation for large-scale FL on a high-performance server. The revaluation results of $N=\{100,300,500,800,1000\}$ are given in Figure~\ref{fig:ofl-n-comp-sim}. It is interesting to see that OFL's convergence speed slows down in large-scale graph classification tasks while OFL's convergence state slightly falls down in large-scale image classification tasks. The underlying reason is that the contrastive learning process yields more diverse subgraphs for large-scale clients, enlarging the difficulty of global consensus. As for image datasets, non-i.i.d. data split significantly affects the global model, which can be mitigated by personalized FL \cite{collins2021exploiting}.
\begin{figure}[ht]
\centering
\includegraphics[width=0.9\linewidth]{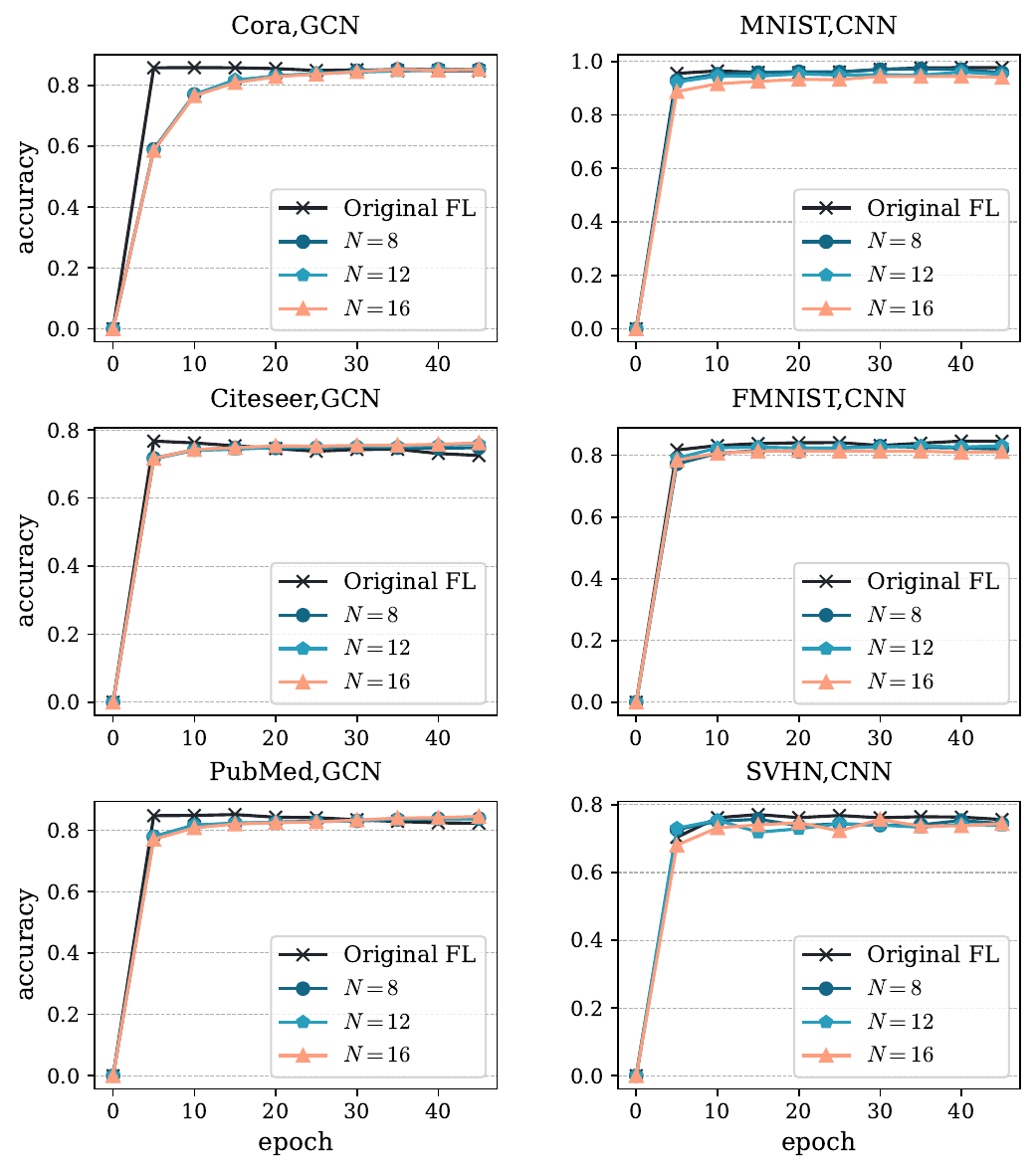}
\caption{Real-world results of different client scales.}
\label{fig:ofl-n-comp-jetson}
\end{figure}

\begin{figure}[ht]
\centering
\includegraphics[width=0.9\linewidth]{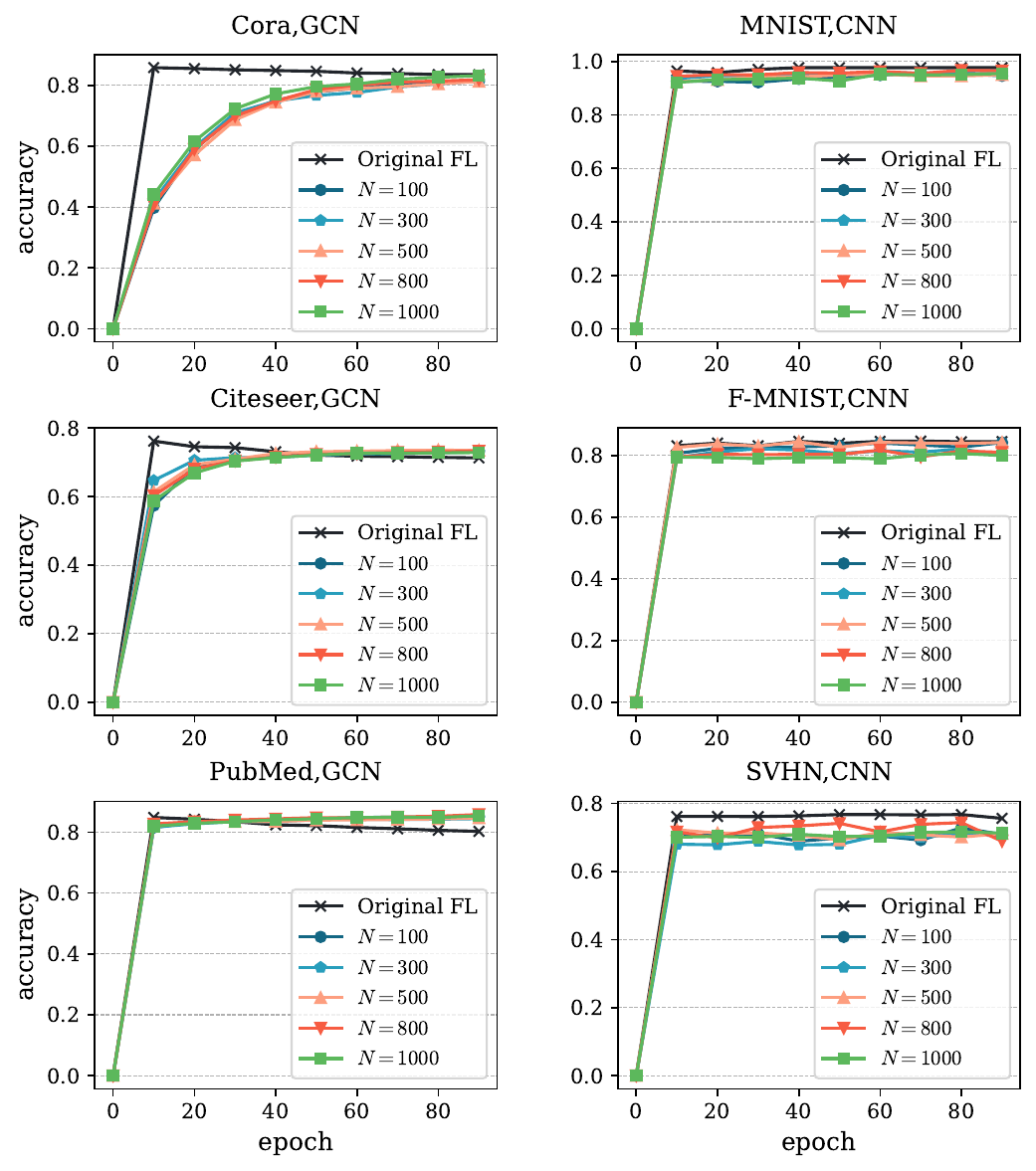}
\caption{Simulation results of large-scale tasks.}
\label{fig:ofl-n-comp-sim}
\end{figure}

We evaluate OFL's heterogeneous resources' effectiveness. In the real-world testbed, we record the training performance and cluster leader serving time of each client in Figure~\ref{fig:distribute-performance}. The left plot displays the performance variances of clients in different scales, ranging from 1\% to 11\%. The right plot illustrates the number of epochs each client serves as a leader when four devices are resource-heterogeneous. The result indicates that a device with a higher resource level is more likely to be selected as a leader. However, the leader's local model will be a straggler after long-time service. Then, devices with lower resource levels still get a chance to be leaders. In this way, OFL balances heterogeneous resources' effectiveness and training performance, offering fair model contributions.
\begin{figure}[ht]
\centering
\includegraphics[width=0.9\linewidth]{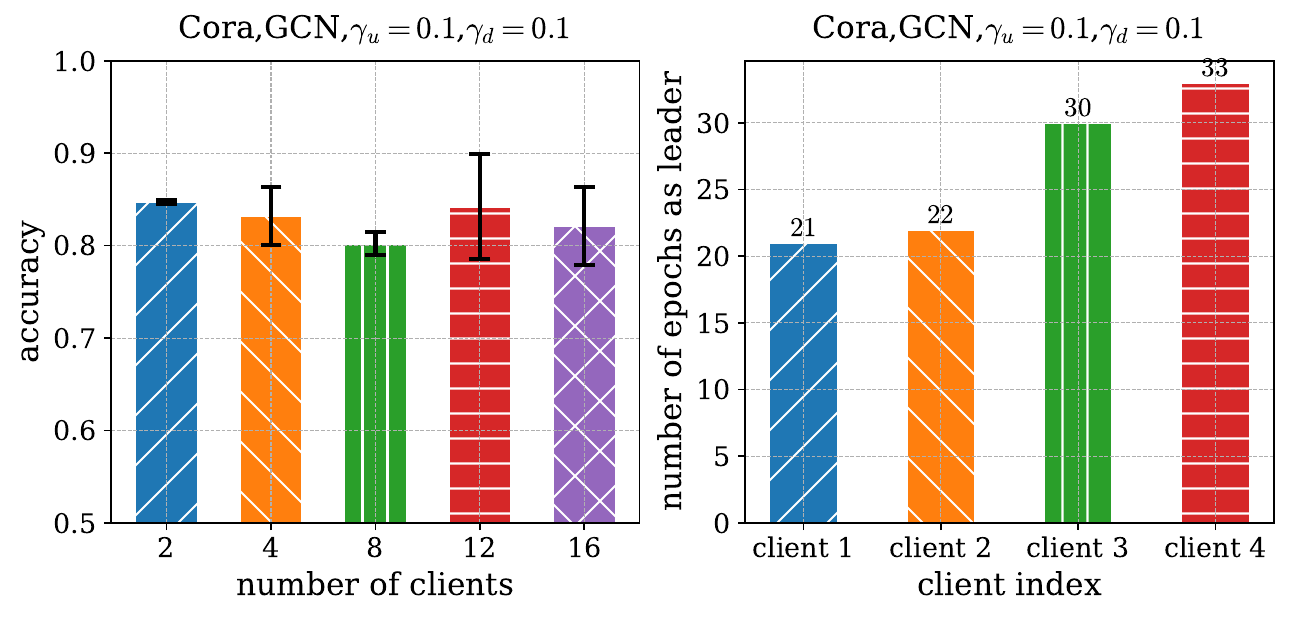}
\caption{Model accuracy and leader serving time of each client.}
\label{fig:distribute-performance}
\end{figure}

We further evaluate the impact of hyperparameters in OFL. In the evaluation, we find that the uploading and downloading ratios $\gamma_{up}$ and $\gamma_{down}$ have the most significant impact on training performance. Figures~\ref{fig:ofl-up-comp16} and \ref{fig:ofl-down-comp16} give an interesting finding that larger $\gamma_{up}$ and $\gamma_{down}$ values are not always better, which differs from the intuition. As $\gamma_{up}$ and $\gamma_{down}$ get larger, more parameters should be handled by clients and leaders, which in turn increases their workload, slowing down training performance. We also note that learning curses decrease in the last epochs because the global model becomes overfitting, which happens earlier in the original FL. This means that OFL mitigates the overfitting phenomenon by using an opportunistic syncing strategy.
\begin{figure}[ht]
\centering
\includegraphics[width=\linewidth]{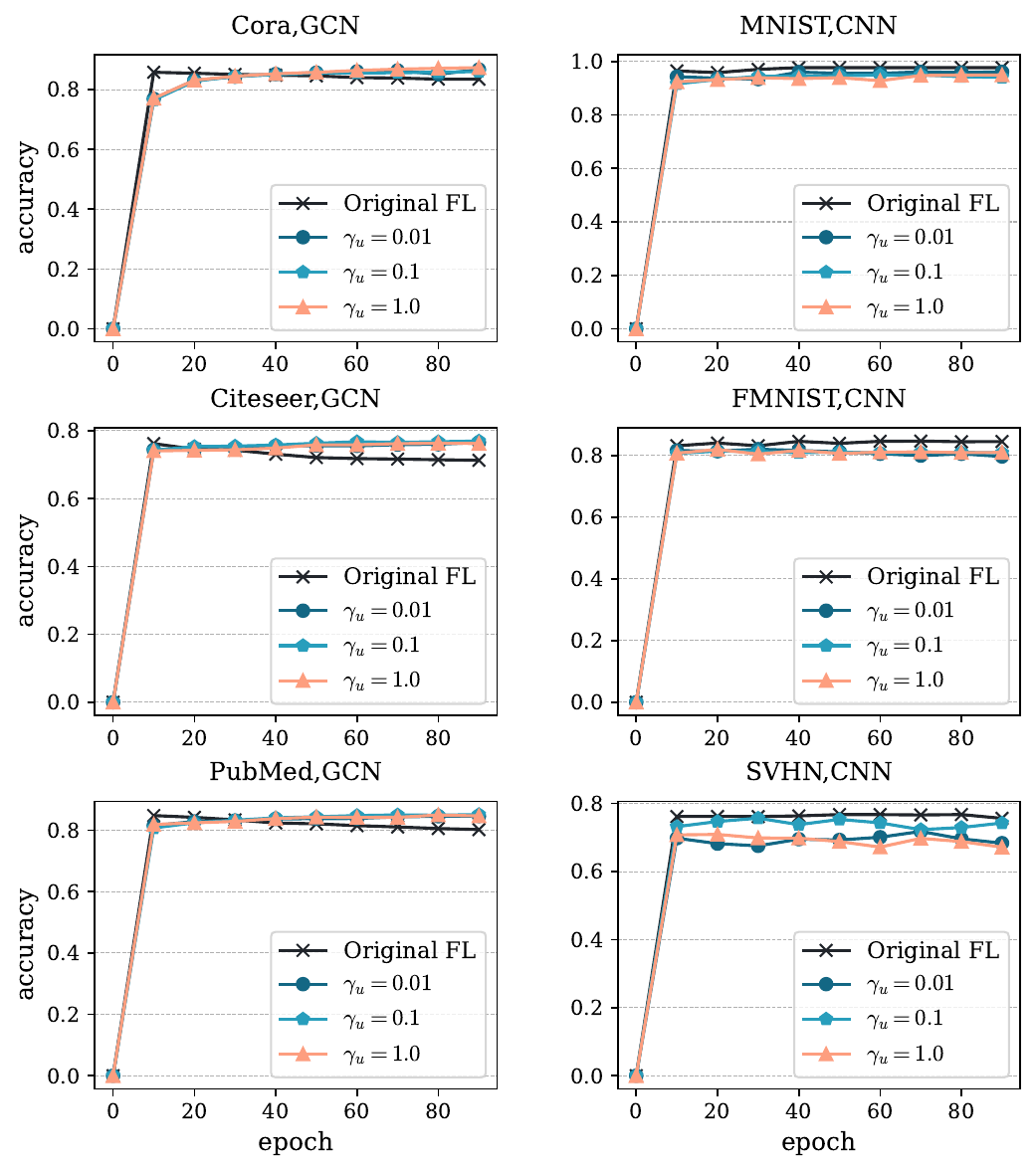}
\caption{Impact of $\gamma_{up}$ when $N=16, \gamma_{down} = 0.1$.}
\label{fig:ofl-up-comp16}
\end{figure}
\begin{figure}[ht]
\centering
\includegraphics[width=\linewidth]{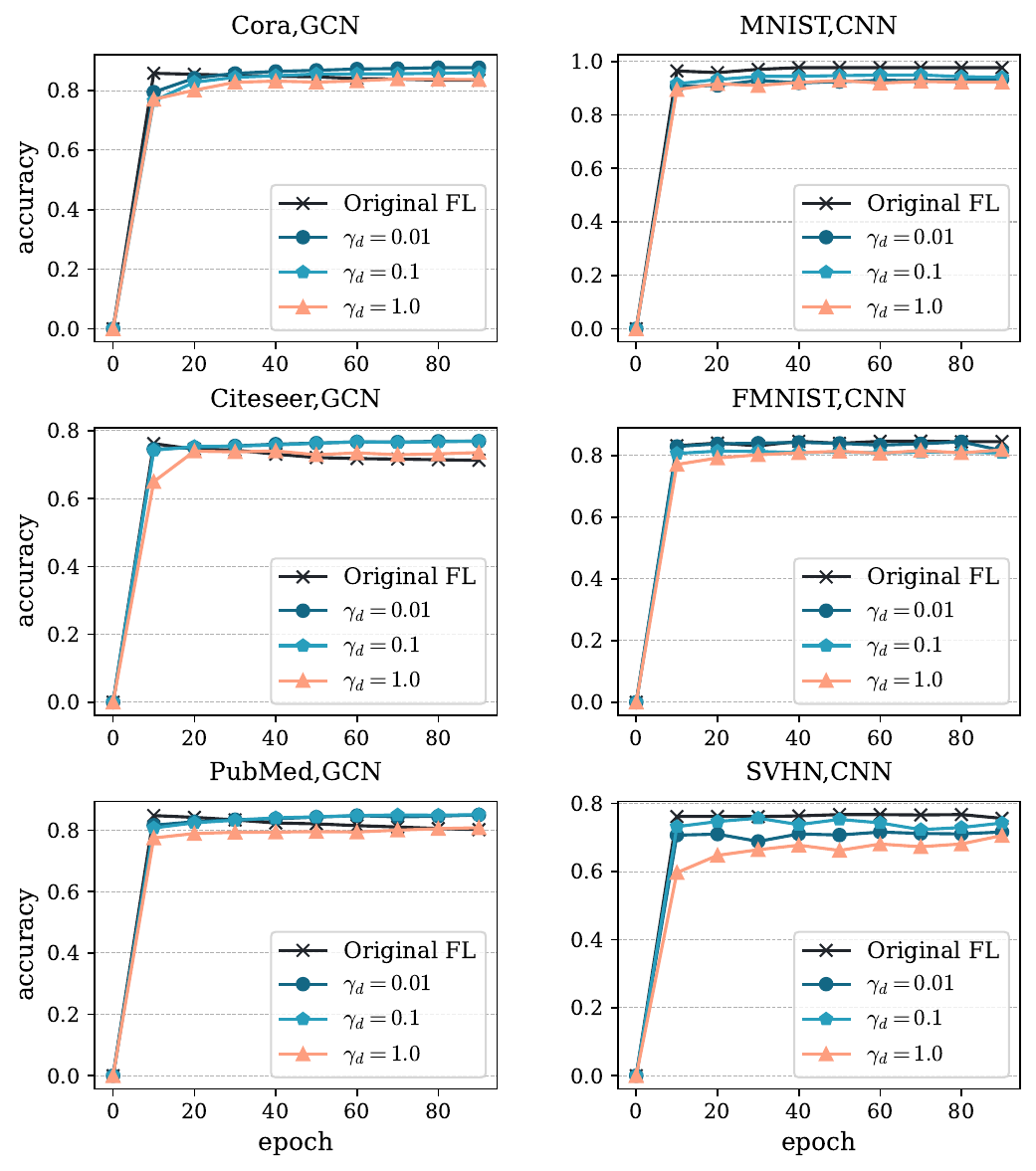}
\caption{Impact of $\gamma_{down}$ when $N=16, \gamma_{up} = 0.1$.}
\label{fig:ofl-down-comp16}
\end{figure}






\subsubsection{Efficiency}
We evaluate the efficiency of OFL from two perspectives, resource consumption and running time. To fully investigate the resource consumption of heterogeneous devices, we evaluate four different types of devices (A, B, C, and D as shown in Table~\ref{tab:setting}). In Figure~\ref{fig:workload}, we report the average memory usage, CPU usage, GPU usage, and power consumption of OFL during the training period. We can find some similar patterns in Figure~\ref{fig:workload}. Since model operators and implementation codes are very similar, heterogeneous devices share similar memory usage in all FL tasks, while figures of graph datasets have a larger variance than that of image datasets. CPU usage figures indicate that devices with more CPU cores and higher operational powers spend less CPU time, while low-resource devices need more CPU time to finish equivalent workloads. There are also some differences in Figure~\ref{fig:workload}. GPU usage is highly relevant to task types. While graph tasks are CPU-intensive, image tasks get more acceleration from GPUs. Particularly, type A devices have the least GPU usage because the bottleneck is CPU cores, causing more GPU idle time. If we zoom in on MNIST and FMNIST tasks, we can see that type B devices have more GPU usage than type C devices. Type B devices' lower maximal GPU frequency leads to more GPU time consumption. Since the SVHN dataset has more samples than the others, device GPUs of types A, B, and C are fully loaded, resulting in a clear rank regarding device resources. Since type D devices have the most powerful resources, GPU still has idle time in the SVHN case. We can see from power consumption figures that image tasks are heavier than graph tasks. Besides, power consumption figures have different results. If we ignore CPU usage, we can simply tell that power consumption has the same pattern as GPU usage, which indicates that GPU consumes more power than CPU in all FL tasks. In summary, the similarity of CPU usage shows that the CPU can be a bottleneck for heterogeneous devices, while the difference in GPU usage shows that GPU consumption significantly depends on FL tasks. Based on resource consumption results, we can conclude that grouping devices with similar resources yields a relatively stable training pace, thereby giving OFL high resource efficiency.
\begin{figure*}[ht]
\centering
\includegraphics[width=0.8\textwidth]{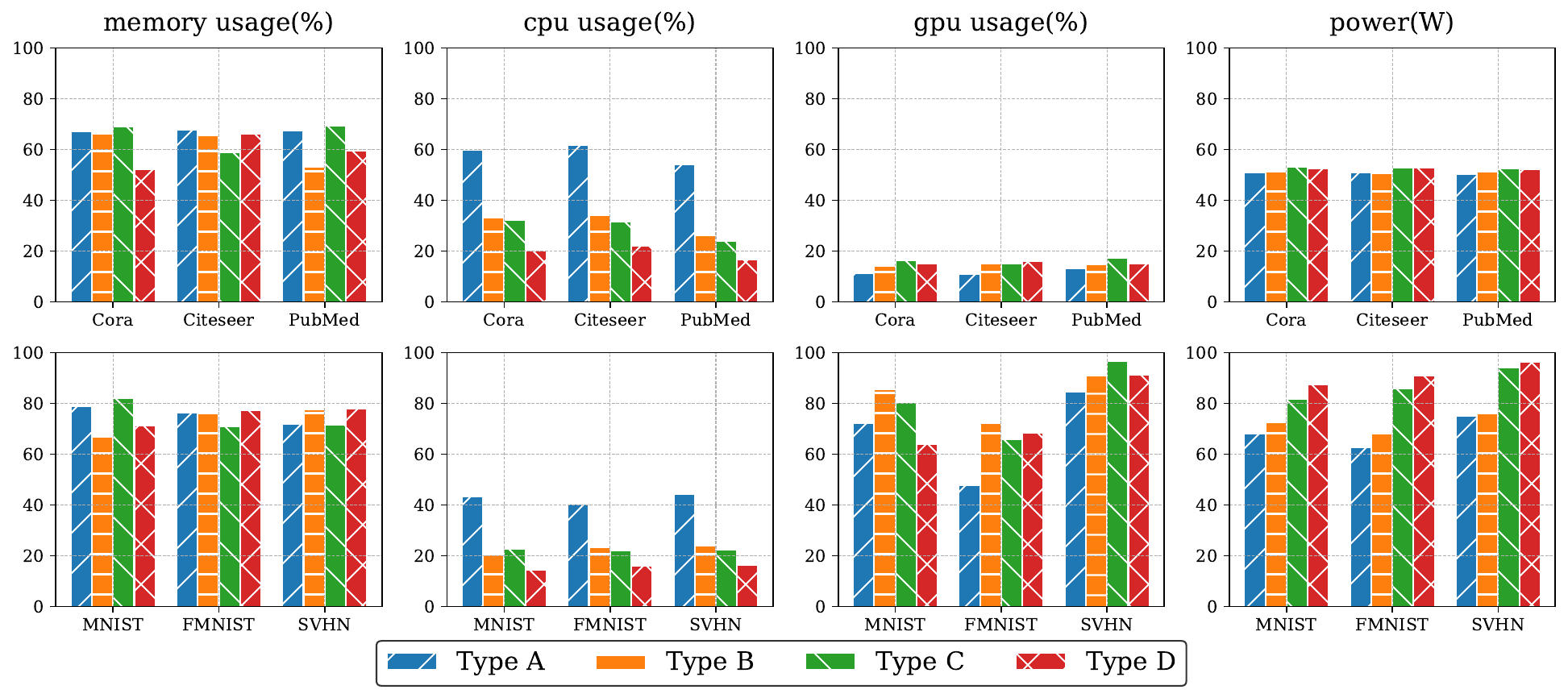}
\caption{Resource consumption of heterogeneous devices in different FL tasks.}
\label{fig:workload}
\end{figure*}


In terms of running time, we evaluate the additional time consumption of clients, cluster leaders, and the server in each task. Figure~\ref{fig:dp} shows the additional running time of each client, which is mainly caused by DP mechanisms. The time variance across different tasks is due to the data sample amount. Since DP mechanisms handle data samples locally and separately, it is reasonable to see clients' running time has a very similar pattern with dataset scales. Figure~\ref{fig:he} shows the time consumption of server-side and leader-side operations in different FL client scales. The left y-axis displays the running time of the server's additional operations (including \textsf{KeyGen}, \textsf{Add}, \textsf{ServerDec}, and \textsf{FinDec}), while the right y-axis displays the running time of each leader's additional operations (including \textsf{Enc} and \textsf{PartDec}). Notably, except for \textsf{ServerDec}, the server's time consumption for each operation grows linearly with the client scale. Benefiting from OFL's encryption design, each leader's time consumption for each operation is nearly constant.
\begin{figure}
    \centering
    \begin{minipage}{0.48\linewidth}
        \includegraphics[width=\linewidth]{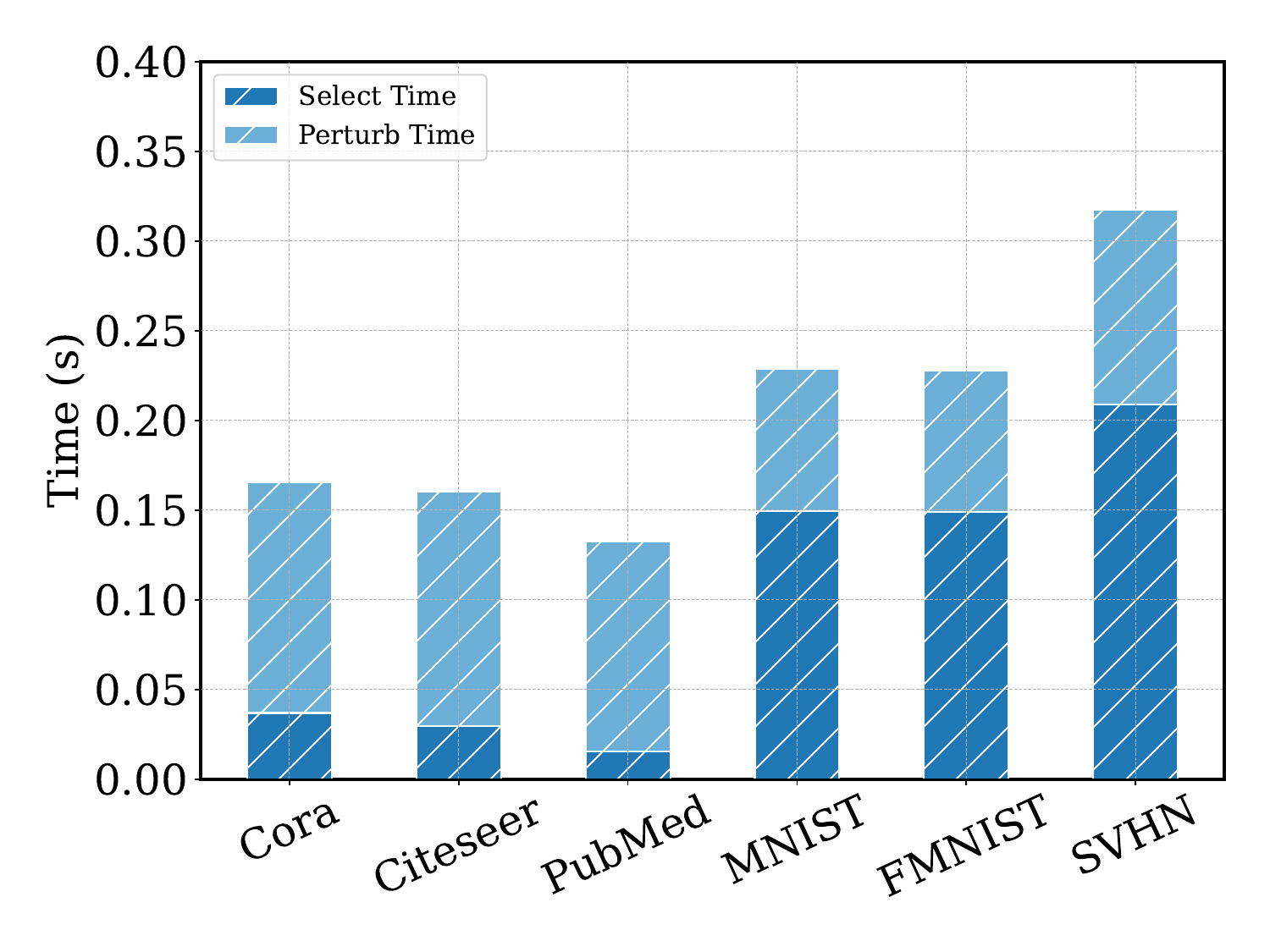}
        \caption{Running time of each client.}
        \label{fig:dp}
    \end{minipage}
    \hfill
    \begin{minipage}{0.48\linewidth}
        \includegraphics[width=\linewidth]{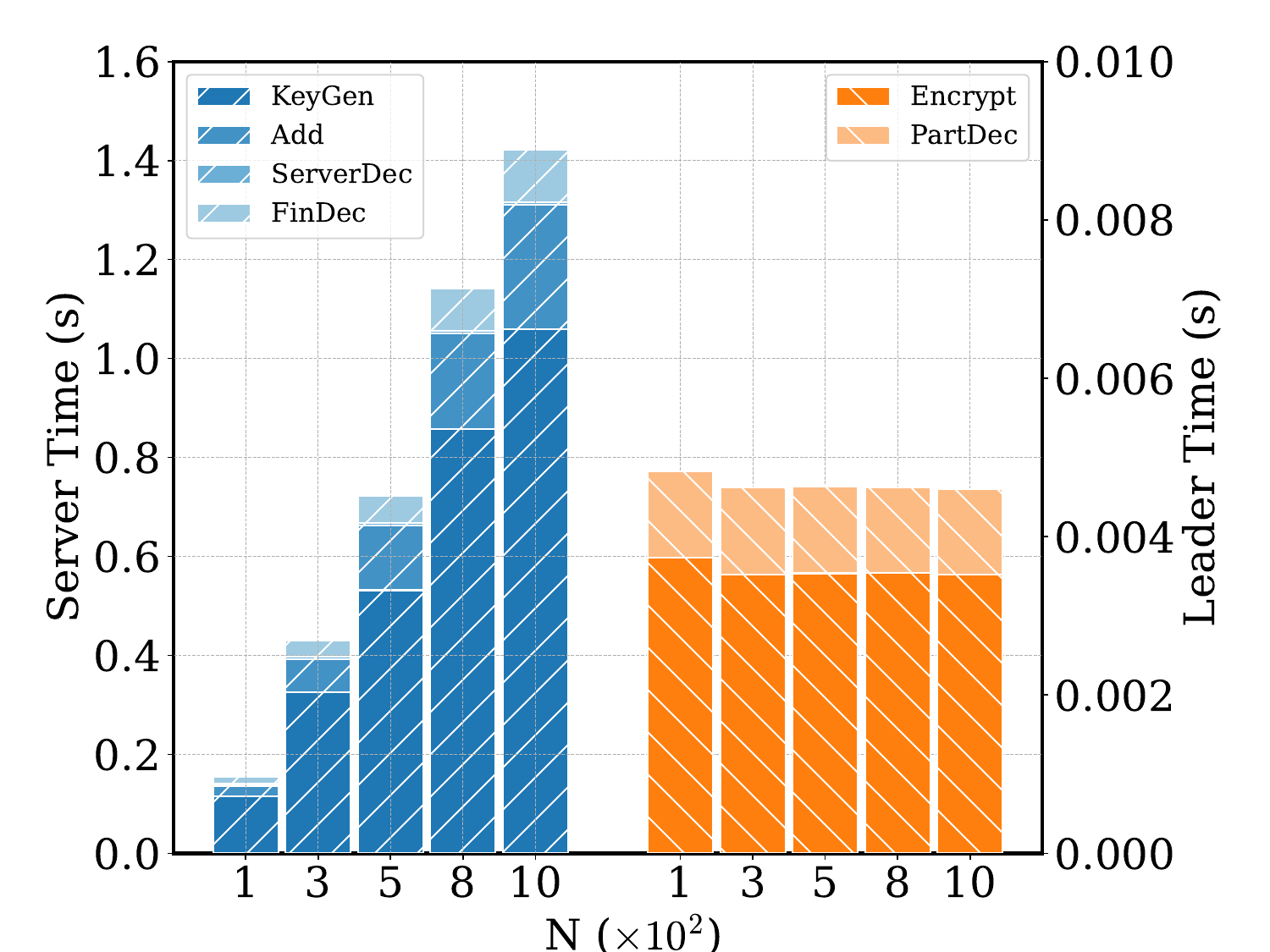}
        \caption{Running time of server and each leader.}
        \label{fig:he}
    \end{minipage}
\end{figure}

\section{Conclusion}
In this paper, we propose OFL, a novel FL framework for resource-heterogeneous and privacy-aware FL participants. OFL introduces an opportunistic synchronizing strategy to effectively reduce communication overhead and mitigate the overfitting phenomenon. By dynamic re-clustering, OFL achieves better resource utilization. OFL designs new intra-cluster DP mechanisms and inter-cluster HE schemes for security concerns. Moreover, the newly designed solution can detect outliers on encrypted model updates efficiently. Experimental results on a real-world testbed and in a large-scale simulation demonstrate that OFL achieves satisfying performance and efficiency, outperforming the existing work.

We should also note OFL's limitation. Like other resource-perception FL solutions, OFL depends on participants' resource status reports. The current OFL implementation cannot work correctly if participants cannot send or arbitrarily forge status reports. In this case, attackers can manipulate re-clustering results by reporting fake resource information, making the overall model performance more dependent on specific client parameters. This creates more favorable conditions for backdoor or poisoning attacks and is interesting to study further. Our future work will focus on an FL solution with implicit participant status observation rather than relying on explicit reports.

\bibliographystyle{ACM}
\bibliography{ref}

\end{document}